\newtheorem{theorem}{\textbf{Theorem}}
\newtheorem{lemma}{\textbf{Lemma}}
\newtheorem{corollary}{\textbf{Corollary}}
\newcommand{\Verify}[1]{\footnote{\textcolor{red}{Lei - #1}}}
\begin{document}

\title{Theoretical Analysis for Extended Target Recovery in Randomized Stepped Frequency Radars}

\author{Lei Wang, Tianyao Huang*,
        and~Yimin Liu% <-this % stops a space
\thanks{Part of this work \cite{Wang2019IEEEICSIDP} will be presented at the 2019 IEEE International Conference on Signal, Information and Data Processing, Chongqing, China, Dec. 2019.}
\thanks{L. Wang, T. Huang and Y. Liu are with the Department
of Electronic and Engineering, Tsinghua University, Beijing, 100084, China, e-mail:wangleixd@sina.com, \{huangtianyao, yiminliu\}@tsinghua.edu.cn.}}% <-this % stops a space

% make the title area
\maketitle

\begin{abstract}
%Recent years, Sparse Recovery (SR) that exploits the sparsity of the target scene has been introduced for range-velocity reconstruction in Randomized Stepped Frequency Radars (RSFRs). Due to the synthesized high range resolution in RSFRs, a target usually occupies a series of range cells and is referred to as an extended target. This clustering characteristic as well as the sparsity of the target scene appear as the block-sparsity property. By leveraging the block-sparsity property, which better characterizes the priori knowledge of the extended target than the conventional sparsity property, Block Sparse Recovery (BSR) promises to outperform the conventional SR. This paper studies the RSFR range-velocity reconstruction using BSR, and theoretically analyzes the recovery condition. Both simulation and measured data results demonstrate the superiority of the BSR over conventional SR in range-velocity reconstruction with RSFRs.
Randomized Stepped Frequency Radar (RSFR) is very attractive for tasks under complex electromagnetic environment. Due to the synthetic high range resolution in RSRFs, a target usually occupies a series of range cells and is called an \emph{extended target}. To reconstruct the range-Doppler information in a RSFR, previous studies based on sparse recovery mainly exploit the sparsity of the target scene but do not adequately address the extended-target characteristics, which exist in many practical applications. \emph{Block sparsity}, which combines the sparsity and the target extension, better characterizes a priori knowledge of the target scene in a wideband RSFR. This paper studies the RSFR range-Doppler reconstruction problem using block sparse recovery. Particularly, we theoretically analyze the block coherence and spectral norm of the observation matrix in RSFR and build a bound on the parameters of the radar, under which the exact recovery of the range-Doppler information is guaranteed. Both simulation and field experiment results demonstrate the superiority of the block sparse recovery over conventional sparse recovery in RSFRs.
\end{abstract}

\begin{IEEEkeywords}
Randomized stepped frequency radar, block sparse recovery, Block-Lasso, block coherence, spectral norm.
\end{IEEEkeywords}

\IEEEpeerreviewmaketitle

\section{Introduction}

\iffalse
{\color{red}
We may write the introduction as following. Each paragraph contains 6 to 7 sentences.

Benefits of RSFR.

Importance of joint range-Doppler processing and the necessity of applying sparse recovery. Performance guarantee for RSFR.

Show the problem, i.e. performance degradation, when the bandwidth is large and the targets become extended targets.

Why block sparse recovery can solve such problem?

Introduce what we do and what we achieve (theoretical results, simulation/field experiment results.)

summarize the contribution (optional).

structure of the rest.

notation.

}
\fi
\IEEEPARstart{R}{andomized} Stepped Frequency Radar (RSFR) randomly varies the carrier frequencies over wide band in a pulse-by-pulse manner. It has attracted growing attentions due to its multi-fold merits, e.g., excellent resistance to range ambiguity \cite{4137843}%\Verify{Doppler ambiguity?}\Answer{In the reference [1] it states range ambiguity. I think it means the echoes beyond the PRI can be received unambiguously. It is also OK to say Doppler ambiguity.}
, low probability of intercept and detection \cite{4338057}, and promising potential for anti-neighbour interference \cite{Huang2012RadarConf}.
In addition, while using low-cost, narrow band receiver, RSFRs coherently process with these varying carrier frequencies, synthesizing a large bandwidth and enabling high range resolution (HRR) profiling.
Since the works \cite{4137843} and \cite{Liu2000IEEERadarConf}, more and more applications in both military and civilian fields have been developed, such as RSFR-based Synthetic Aperture Radar (SAR) \cite{Yang2013TGRS} and Inverse SAR (ISAR) imaging \cite{Wang2019SensorJour}, micro-motion feature extraction \cite{Zhao2019IEEERadarConf}, cognitive radar system design \cite{Huang2014TAES} and automotive applications \cite{Hourani2017VTC}. Among these developments on RSFRs, the range-Doppler reconstruction is a common, fundamental but not simple problem.

Early works \cite{4137843,4338057,Huang2012RadarConf,Liu2000IEEERadarConf} apply the conventional matched filter for range-Doppler reconstruction, which results in high sidelobe pedestal. Weak targets could be submerged in the sidelobe of dominant ones \cite{Liu2008EL}.
As explained in \cite{Huang2018TSP}, the echoes of RSFR can be regarded as sampling of instantaneous wideband radar echoes, where each pulse occupies an instantaneous bandwidth as large as the synthetic bandwidth of RSFR. The sidelobe pedestal comes from the incomplete information in frequency domain \cite{Huang2018TSP}.
In order to alleviate the sidelobe pedestal problem, sparse recovery techniques have been introduced \cite{Liu2008EL,Huang2011Radarconf}. By exploiting the intrinsic sparsity of the target scene, sparse recovery obtains provable performance on range-Doppler reconstruction.
%Due to the random modulation of waveform, the sidelobe pedestal of RSFR ambiguity function is high and fluctuates randomly like noise. When a conventional matched filter is applied for range-Doppler reconstruction, the weak targets could be submerged in the sidelobe of dominant ones \cite{4606609}, which decreases the performance of parameter estimation, feature extraction and so on. By exploiting target sparsity, Sparse Recovery (SR) techniques have been introduced in order to alleviate the sidelobe pedestal problem. And the range-Doppler reconstruction performance is guaranteed as long as the sparsity of target scene is not so large.
Particularly, \cite{Huang2018TSP} proves that, as long as the number of targets/scatterers is in the order of $\tiny{O\left(\sqrt{\frac{N}{\log{MN}}}\right)}$, where $N$ and $M$ are the numbers of transmitted pulses and carrier frequencies, respectively, exact recovery of range-Doppler parameters can be guaranteed with high probabilities. We note here that \cite{Huang2018TSP} assumes a radar target typically containing a single scatterer. This assumption holds when the range resolution is larger or comparable to the size of target.

However, when the synthetic bandwidth of RSFR becomes wider, leading to a finer HRR, the size of a target can be relatively larger than the range resolution. In this case, a target occupies a series of range cells and is called an extended target \cite{771034}.
The extended-target scene has two properties that may affect the range-Doppler recovery in RSFRs. Firstly, the number of scatterers increases sharply along with the increase of the synthetic bandwidth. As a consequence, it becomes harder to ensure the sparsity of the observing scene, which may give rise to failure of target recovery.
Secondly, extended targets exhibit additional structure. Particularly, scatterers of such target usually cluster along range, while their Doppler effects are identical.
Together with the inherent sparsity, this clustering character indicates that the extended target scene possesses \emph{block sparsity}. % \cite{5290295}.
When a RSFR encounters extended targets, we apply block sparse recovery to mitigate the conceivable performance degradation of traditional sparse recovery. Utilization of block sparsity can provably yield better recovery performance than treating the signal just as being sparse in the conventional sense \cite{5319739}.

Block sparse recovery has been well studied in the literature, and successfully exploited in various applications such as face/speech recognition \cite{Wright2008TPAMI} narrow-band interference suppression \cite{liu2018block} and multiple-measurement parameter estimation \cite{Van2010TIT}. %\Verify{Please check these references. It seems that Chen's paper is not relevant. At that year, block sparse recovery was not that popular.}
Many effective algorithms including greedy approaches and convex optimization methods are developed to reconstruct block-sparse signals \cite{Yuan06modelselection,Eldar2010TSP,5290295,Baraniuk2010}. Adequate researches investigate conditions under which a unique block-sparse representation of a signal can be determined by these algorithms; see \cite{Baraniuk2010, Elhamifar2012TSP} and references therein. Block sparse recovery provides good reconstruction results in practice, inspiring its utilization in RSFR applications.

In this work, we focus on theoretical analysis of block sparse recovery for RSFR. Different from previous works \cite{Eldar2010TSP,5290295,eldar2010TITAverage,Baraniuk2010,Elhamifar2012TSP,Bajwa2015} that establish generic conditions ensuring exact recovery of block-sparse signals, we prove that RSFRs are likely to satisfy these conditions under a requirement associated with radar parameters and the block sparsity of the extended-target scene.
Specifically, we begin by analyzing the specific block structures of the observation matrix in RSFR, which has not been revealed previously as the best of our knowledge, and then discuss the block coherence and spectral norm \cite{Bajwa2015} of the observation matrix. Based on the block incoherence condition \cite{Bajwa2015}, we finally prove that as long as the number of extended targets is in the scale of $\tiny{O\left(\frac{N}{M\log{MN}}\right)}$, exact reconstruction of range-Doppler parameters are guaranteed with high probabilities.

Both simulation and field experiments are carried out, and the results demonstrate that the block sparse recovery algorithms outperform the corresponding non-block sparse recovery algorithms on recovering extended targets with RSFRs.
Particularly, with measured data from practical RSFRs, block sparse recovery is shown effective to reconstruct multiple air targets and surface target in heavy clutter environment, repectively.

The rest of the paper is organized as follows. Section \ref{sec:model} formulates the signal model. Section \ref{sec:BSR} introduces some basics of traditional sparse recovery and block sparse recovery. The recovery performance analysis for RSFRs with block sparse recovery is developed in Section \ref{sec:condition}. Section \ref{sec:experiment} presents experimental results of simulated and measured (from both air and surface targets) data. Section \ref{sec:conclusion} draws a brief conclusion.

The following notation is used throughout this paper. We denote sets by upper case letters in an outline font, e.g., $\mathbb{R}$ and $\mathbb{C}$ denote the real number set and the complex number set, respectively. For $x \in \mathbb{R}$, $\left| x \right|$ and $\lfloor x\rfloor$ ($\lceil x \rceil$) represent the absolute value and the largest (smallest) integer no greater (less) than $x$, respectively. And $\delta\left(x\right)$ is the indicator function, which is 1 when $x=0$ and 0 otherwise. For $x \in \mathbb{C}$, $\left| x \right|$ represents the modulus of $x$.
%Note that the binary operator $\left(\cdot\right)*\left(\cdot\right)$ denotes the convolution operator.
We use lowercase boldface letters to denote vectors (e.g. $\bm{a}$) and uppercase boldface letters to denote matrices (e.g. $\bm{A}$). The operators $\left(\cdot\right)^*, \left(\cdot\right)^T$ and $\left(\cdot\right)^H$ represent the complex conjugate, transpose, and complex conjugate-transpose operators, respectively. For a vector $\bm{a}$, $\left[\bm{a}\right]_n$ denotes the $n$-th entry and $\|\bm{a}\|_i$ denotes the $\ell_i$ norm of $\bm{a}$, $i = 0,1,2$. For a matrix $\bm{A}$, the $\left(m, n\right)$-th element is written as $\left[\bm{A}\right]_{m,n}$ and the spectral norm of $\bm{A}$ (i.e. the maximum singular value of $\bm A$) is denoted by $\|\bm{A}\|_s$. Let $\bm{I}_N$ denote the $N$-th-order identity matrix, $\mathds{P}\left(\cdot\right)$ denote the probability of an event, and $\mathds{E}\left[\cdot\right]$ represent the expectation of a random argument.

\section{RSFR Signal Model}
\label{sec:model}
In this section, we present the signal model of RSFR, following the presentation in \cite{Huang2018TSP}. However, unlike \cite{Huang2018TSP}, which models a target as a single scatterer, we consider an extended-target model, in which each target contains multiple scatterers moving at identical velocity. Under the extended target model, we reveal that the target scene possesses block sparsity, which inspires the application of block sparse recovery algorithms, different from the use of traditional sparse recovery in previous work \cite{Huang2018TSP}. We review the transmit model of RSFR in Subsection \ref{subsec:transmission}, and detail the receive model in Subsection \ref{sec:model:signal}, which is then recast in matrix form as present in Subsection \ref{subsec:model:matrix form}.

\subsection{Transmission of RSFR}
\label{subsec:transmission}

In a RSFR, there are $N$ single-frequency sinusoidal pulses transmitted during a Coherent Processing Interval (CPI). For the $n$-th pulse, $n \in \mathbb{N}:=\{0,1,\dots,N-1\}$, the carrier frequency is randomly varied as $f_n = f_c + C_n \Delta{f}$, where $f_c$ is the initial carrier frequency, $\Delta{f}$ is the frequency step interval and $C_n \in \mathbb{M}:=\{0,1,\dots,M-1 \}$ is the randomized modulation code. Thus, the $n$-th transmitted pulse can be expressed as
\begin{equation}
\label{Eq:TransPulse}
s_{\rm T}\left( n, t \right)= \text{rect}\left(\frac{t - nT_r}{T_p} \right)e^{j2\pi f_n\left(t - nT_r\right)},
\end{equation}
where $T_r$ is the Pulse Repetition Interval (PRI), $T_p$ is the pulse width and $\text{rect}\left( t \right)$ is the rectangular function defined as
\begin{equation}
\text{rect}\left( t \right) = \left\{\begin{array}{ll}
1& 0 \leq t \leq 1,\\
0& \text{otherwise}.
\end{array}\right.
\end{equation}
Here, we assume that the modulation codes $C_n$ are independently identically distributed random variables with uniform density over $\mathbb{M}$, i.e., $C_n \sim U\left( \mathbb{M}\right)$.
In RSFR, the instantaneous bandwidth of each pulse, denoted by $B_0 := 1/T_p$, is usually narrow. The narrow bandwidth leads to a low Coarse Range Resolution (CRR), i.e.,  $\frac{c}{2B_0}$, where $c$ is the speed of light. By synthesizing echoes of different frequencies, we obtain a larger synthetic bandwidth, $B = M\Delta f$, which refines the range resolution to $\frac{c}{2B}$.

\subsection{Radar Returns Model}
\label{sec:model:signal}

We then derive the expressions of received echoes, which are delays of the transmissions. We begin by considering a single ideal scatterer with complex scattering coefficient $\gamma$. Multiple-scatterer scenario is a simple extension, and will be discussed later in this section. Assume that the scatterer is moving along the radar line of sight with a constant velocity $v$ and an initial range $R$. Let $\tau\left( t \right) := \frac{2\left(R + vt\right)}{c}$ represent the time delay at the time instant $t$. Under the ``stop-and-go'' model \cite{richards2005fundamentals}, it holds that $\tau\left( t \right) \approx \tau\left( nT_r \right)$.
Then, the received echo can be written as
\begin{equation}
\label{Eq:RecievedPulse}
{s}_{\rm R}\left( n, t \right) = \gamma s_{\rm T}\left( n, t - \tau\left( t \right) \right) \approx \gamma s_{\rm T}\left( n, t - \tau\left( nT_r \right) \right).
\end{equation}
The RF echo of each pulse, ${s}_{\rm R}\left( n, t \right)$, is down-converted to the baseband by its corresponding carrier frequency, e.g., $e^{j2\pi f_n t}$ for the $n$-th pulse. After down-conversion, the baseband echo is represented by
\begin{equation}
\label{Eq:BaseBandEcho}
\begin{split}
\overline{s}_{\rm R}\left( n, t \right) &\!\! =  {s}_{\rm R}\left( n, t \right) \cdot e^{-j2\pi f_n \left(t - nT_r\right)}\\
&\!\! = \gamma \text{rect}\left(\frac{t - nT_r - \tau\left( nT_r \right)}{T_p} \right) e^ {-j2\pi f_n \tau\left( nT_r \right)}\\
&\!\! = \overline{\gamma} \text{rect}\left(\frac{t - nT_r - \tau\left( nT_r \right)}{T_p} \right) \cdot e^ {j2\pi f_R C_n + j2\pi f_v \xi_n n}
\end{split}
\end{equation}
where $\overline{\gamma} = \gamma e^{-j2\pi f_c \tau\left( nT_r \right)}$ and $\xi_n := 1 + \frac{C_n\Delta f}{f_c}$. We regard $f_R := -\frac{2\Delta f R}{c}$ and $f_v := -\frac{2f_cvT_r}{c}$ as the range frequency and velocity frequency, respectively.
\iffalse
After that, a matched filter is usually implemented upon the baseband echo.\Verify{This part is not directly related to the range-Doppler reconstruction, so I suggest to omit it.} By introducing the time $\hat{t} = t - nT_r, \hat{t} \in \left[0,T_r\right)$, the output echoes can be expressed as follows
\begin{equation}
\label{Eq:BaseBandEcho Matched filter}
\begin{split}
\overline{s}_{\rm R}\left( n, \hat{t} \right) &=  {s}_{\rm R}\left( n, \hat{t} \right) \ast \left[\text{rect}\left(-\frac{\hat{t}}{T_p}\right)\alpha^*\left(-\hat{t}\right)\right]\\
& = \gamma' \left[\text{rect}\left(\frac{\hat{t} - \tau\left( nT_r \right)}{T_p} \right)\alpha\left( \hat{t} - \tau\left( nT_r \right)\right)\right] \\
&\ \ \ \ast \left[\text{rect}\left(-\frac{\hat{t}}{T_p}\right)\alpha^*\left(-\hat{t}\right)\right] e^ {j2\pi f_R C_n + j2\pi f_V \xi_n n}\\
& \approx \gamma' A\left[ \hat{t} - \tau\left( 0 \right)\right] e^ {j2\pi f_R C_n + j2\pi f_V \xi_n n},
\end{split}
\end{equation}
where $A\left( \hat{t}\right) = \left[\text{rect}\left(\frac{\hat{t}}{T_p}\right)\alpha\left(\hat{t}\right)\right] \ast \left[\text{rect}\left(-\frac{\hat{t}}{T_p}\right)\alpha^*\left(-\hat{t}\right)\right]$. Note that we assume the scatterer does not cross a CRR bin, i.e. $vNT_r < \frac{c}{4B_0}$, and $A\left[ t - \tau\left( n T_r\right)\right] \approx A\left[ t - \tau\left( 0 \right)\right]$ holds in the last equation of \eqref{Eq:BaseBandEcho Matched filter}.
\fi

For each pulse, we then sample the baseband echoes $\overline{s}_{\rm R}\left( n, t \right)$ at time instants, $t = nT_r + l_s/f_s$, $l_s = 0, 1, \cdots, \lfloor T_r f_s\rfloor$. We use the Nyquist sampling rate, i.e., $f_s = B_0$, so that each sample corresponds to a CRR bin. Echoes from these bins are processed identically and individually. Without loss of generality, suppose that the $l$-th CRR bin contains the scatterer, and the scatterer stays inside the bin during the CPI. In the rest of paper we focus on this single $l_s$-th CRR bin.
By substituting $t = nT_r + l_s/f_s$ into \eqref{Eq:BaseBandEcho}, we obtain the echo sequence
\begin{equation}
\label{Eq:BaseBandEcho slow time}
\begin{split}
s_{\rm R}\left( n \right) = \overline{s}_{\rm R}\left( n, t \right)|_{t = nT_r + l_s/f_s} =  \overline{\gamma}e^ {j2\pi f_R C_n + j2\pi f_v \xi_n n}.
\end{split}
\end{equation}

The model \eqref{Eq:BaseBandEcho slow time} above is derived in the single target/scatterer case. We now extend it to the scenario that a CRR bin contains $K$ targets and the $k$-th target consists of $P_k$ scatterers, $k=0,1, \cdots, K-1$.
Denote by $v_k$ and $f_{v_k}$ the velocity of the $k$-th target and its velocity frequency, respectively. Let $\{{\bar \gamma}_{ki}\}_{i=0}^{P_k-1}$, $\{R_{ki}\}_{i=0}^{P_k-1}$  and $\{f_{R_{ki}}\}_{i=0}^{P_k-1}$ be the scattering coefficients, initial ranges and the corresponding range frequencies of the scatterers contained in the $k$-th target, respectively. We note that these $P_k$ scatterers have the same velocity $v_k$, since they belong to the same target.
The received signal is modeled as the superimposed echoes from all the scatterers belonging to these $K$ targets,
\begin{equation}
\label{Eq:Multi Target RecievedPulse}
s_{\rm R}\left( n \right) =  \sum_{k=0}^{K-1} \sum_{i=0}^{P_k-1}{\overline{\gamma}_{ki} e^ {j2\pi f_{R_{ki}} C_n + j2\pi f_{v_k} \xi_n n}}.
\end{equation}
Here, $\{{\bar \gamma}_{ki}\}$, $\{f_{R_{ki}}\}$  and $\{f_{v_k}\}$, representing the intensity, range and Doppler parameters, respectively, are unknown and should be estimated from the sampled echoes $s_{\rm R}(n)$, $n \in \mathbb{N}$.

In the next subsection, we will rewrite \eqref{Eq:Multi Target RecievedPulse} in a matrix form and reveal the connection between range-Doppler reconstruction and block sparse recovery.
%where $\overline{\gamma}$, $f_R$ and $f_{V}$ in \eqref{Eq:BaseBandEcho slow time} are replaced with $\overline{\gamma}_{ki}$, $f_{R_{ki}}$ and $f_{V_k}$ for the $k$-th target, respectively.

\subsection{Matrix form model with block structure}
\label{subsec:model:matrix form}

Following \cite{Huang2018TSP}, in this subsection, we reformulate the echo model \eqref{Eq:Multi Target RecievedPulse} in matrix form and recast the range-Doppler estimation as a sparse recovery problem. Different from the previous work that exploits non-block sparsity \cite{Huang2018TSP}, we will emphasize the natural block sparsity that appears in wideband RSFR.

Stacking the echoes forms the measurement vector $\bm{y} \in \mathbb{C}^{N}$ with $n$-th entry given by $\left[\bm{y}\right]_n = s_{\rm R}\left( n \right)$.

We then discretize the continuous range frequency and  velocity frequency parameters, $f_R$ and $f_v$, into finite grid points. Note that $\left( f_R, f_v\right)$ is unambiguous in the region $[0,1)^2$ and the resolutions of $f_R$ and $f_v$ are $1/M$ and $1/N$, respectively. We discretize $f_R$ and $f_v$ at the rates of $1/M$ and $1/N$, respectively, leading to the set of grid points, $\{\frac{p}{M} \}_{p \in \mathbb{M}} \times \{\frac{q}{N} \}_{q \in \mathbb{N}}$.
Under the assumption that all the scatterers are located precisely on the grid points, we denote by $\bm \Gamma \in \mathbb{C}^{M \times N}$ the scattering intensities corresponding to the grid points. The $(p,q)$-th entry of $\bm \Gamma$, denoted by $\Gamma_{p,q}$, is given by
\begin{equation}
\label{eq:x}
  \Gamma_{p,q} := \left\{\begin{array}{cl}
\!\! \sqrt{N}\overline{\gamma}_{ki}& \!\! \text{if} \ \exists \left(k, i\right), \left(f_{R_{ki}},f_{v_{k}}\right) = \left(\frac{p}{M},\frac{q}{N}\right)\\
\!\! 0& \!\! \text{otherwise},
\end{array}\right.
\end{equation}
where $\sqrt{N}$ is a normalization factor, representing the gain of coherent processing with $N$ pulses.

We denote by $\bm x_q$ the $q$-th column of $\bm \Gamma$, i.e., $\bm{x}_q := \left[\Gamma_{0,q}, \Gamma_{1,q},\cdots, \Gamma_{M-1,q}\right]^T \in \mathbb{C}^{M}$, which corresponds to a target with velocity frequency $\frac{q}{N}$ and represents the HRR profiles of the target. Vectorization of $\bm \Gamma$ yields $\bm{x} \in \mathbb{C}^{MN}$, i.e.,
\begin{equation}
\label{Eq: block signal}
    \bm{x}:=\left[\bm x_{0}^T, \bm x_{1}^T, \cdots, \bm x_{N-1}^T \right]^T,
\end{equation}
where the HRR profile of each target can be regarded as a block of $\bm x$.

Since there are generally only a few targets in a certain CRR, the observed scene is often sparse. Particularly, due to the block structure indicated in \eqref{Eq: block signal}, only a few blocks in $\bm x$ are nonzero, which reveals that in RSFS the scene possesses \emph{block sparsity}. This additional structure inspires us to apply block sparse recovery in RSFR instead of the canonical sparse recovery. Exploiting the block sparsity leads to better range-Doppler reconstruction performance, as will be discussed later by the theoretical analysis and simulation/field experiments, presented in Section \ref{sec:condition} and \ref{sec:experiment}, respectively.

We now arrange \eqref{Eq:Multi Target RecievedPulse} in matrix form as
\begin{equation}
\label{Eq: matrix form model noiseless}
\bm{y} = \bm \Psi \bm{x},
\end{equation}
where $\bm{\Psi} \in \mathbb{C}^{N \times MN}$ is referred to as the observation matrix.
Consistent with the definition of $\bm x$, $\bm{\Psi}$ is divided into $N$ blocks, i.e.,
\begin{equation}
\label{eq:psi with blocks}
  \bm \Psi := \left[\bm \Psi_0, \bm \Psi_1, \cdots, \bm \Psi_{N-1} \right],
\end{equation}
and each block $\bm \Psi_q \in \mathbb{C}^{N \times M}$, $q \in \mathbb{N}$, corresponds to a unique velocity frequency $\frac{q}{N}$.
There are $M$ columns in a block $\bm \Psi_q$ and we denote by $\bm \psi_{p,q}$ the $p$-th column, i.e., $\bm{\Psi}_q = \left[ \bm{\psi}_{0,q}, \bm{\psi}_{1,q}, \bm{\psi}_{2,q}, \cdots, \bm{\psi}_{M-1,q} \right]$. From \eqref{Eq:BaseBandEcho slow time} and \eqref{eq:x}, the $n$-th entry of $\bm{\psi}_{p,q}$ is given by
\begin{equation}
\label{Eq: observation vector}
\left[\bm{\psi}_{p,q}\right]_n = \frac{1}{\sqrt{N}}e^ {j\frac{2\pi p}{M} C_n + j\frac{2\pi q}{N}\xi_n n},
\end{equation}
where the factor $\frac{1}{\sqrt{N}}$ normalizes the observing vectors so that $\bm{\psi}^H_{p,q}\bm{\psi}_{p,q} = 1$.

In a noisy circumstance, \eqref{Eq: matrix form model noiseless} is rewritten as:
\begin{equation}
\label{Eq:matrix form model}
\bm{y} = \bm{\Psi{x}} + \bm w,
\end{equation}
where $\bm{w}$ is the additive white Gaussian noise with a noise power $\sigma^2$, i.e., $\bm{w}\sim \mathcal{CN}\left(\bm{0},\sigma^2 \bm{I}_N\right)$.

In (\ref{Eq: matrix form model noiseless}) and (\ref{Eq:matrix form model}), $\bm y$ and $\bm \Psi$ are given, while $\bm x$ is unknown and yet should be recovered. When $\bm x$ is reconstructed by solving the linear equation (\ref{Eq: matrix form model noiseless}) or (\ref{Eq:matrix form model}), the HRR profiles and velocity parameters of targets are recovered from the indices of nonzero elements in $\bm{x}$.
Because the dimension of the observations is less than that of the unknown vector $\bm{x}$, i.e. $N<MN$, the problem is under-determined, which inspires the use of sparse recovery or compressed sensing, as discussed previously in \cite{Huang2018TSP}. In this paper, observing the additional block sparsity of the target scene, we apply the block sparse recovery algorithms with the expectation of achieving better reconstruction performance. Both traditional and block sparse recovery will be briefly reviewed in the next section.

\section{Sparse Recovery and Block Sparse Recovery}
\label{sec:BSR}
We first introduce some basic concepts of non-block sparse recovery in Subsection \ref{subsec:sparse:non-block}, and then briefly review block sparse recovery in Subsection \ref{subsec:sparse:block}.

\subsection{Sparse recovery}
\label{subsec:sparse:non-block}
Sparse recovery aims to solve the under-determined problems such as $\bm{y = \Psi{x}}$. In particular, it assumes that $\bm{x}$ is sparse, i.e., there are only a few nonzero entries in $\bm{x}$, and seeks for the sparsest representation of $\bm{y}$ by minimizing the $\ell_0$ "norm"
\begin{equation}
\label{Eq:L0 model}
    \hat{\bm{x}} = \arg \min_{\bm{x}} \| \bm{x} \|_0 , \text{s.t.} \ \bm{y} = \bm{\Psi x}.
\end{equation}
Since the $\ell_0$ optimization is generally NP-hard, many strategies  have been proposed to reduce the computational complexity including greedy approaches and more efficient $\ell_1$ minimization, i.e.,
\begin{equation}
\label{Eq:L1 model}
    \hat{\bm{x}} = \arg \min_{\bm{x}} \| \bm{x} \|_1 , \text{s.t.} \ \bm{y} = \bm{\Psi x}.
\end{equation}
%Sparse recovery has been widely applied in radar signal processing including

There are many works addressing conditions under which \eqref{Eq:L1 model} has a unique solution; see \cite{Eldar2012} and references therein. Most of these researches rely on the mutual coherence or restricted isometry property (RIP) of the measurement matrix $\bm \Psi$, the sparsity level (the number of nonzero elements in $\bm x$) as well as the dimensions of the problem. For example, in RSFR, a specific application of sparse recovery, \cite{Huang2018TSP} proves that \eqref{Eq:L1 model} guarantees the successful recovery of $\bm x$ with high probability (with respect to the random selections of carrier frequencies) when the number of nonzero entries in $\bm x$ is in the order of $O\left(\sqrt{\frac{N}{\log MN}} \right)$.
We later introduce block sparse recovery, which can provably yield better reconstruction properties than treating $\bm x$ being sparse in this conventional sense.

\subsection{Block sparse recovery}
\label{subsec:sparse:block}

Block sparse recovery assumes that nonzero elements appear in a few blocks. And the vector $\bm{x}$ is said $K$-block sparse, if there are at most $K$ nonzero blocks. As discussed in Subsection \ref{subsec:model:matrix form}, block sparsity naturally arise in RSFR when targets are extended in range. Each block of $\bm x$, as defined in \eqref{Eq: block signal}, represents the HRR profiles of an extended target moving at a specific velocity. As noted in \cite{Elhamifar2012TSP}, in general, a block-sparse vector is not necessarily sparse and vice versa.

Block sparse recovery turns to minimize the number of nonzero blocks %\Answer{The previous version states ``minimize the minimum number of nonzero blocks''} \Verify{The current expression is correct. Nice! If I remember correctly, I said the number of blocks with nonzero elements?}\Answer{I just copied the statement ``minimize the minimum number of nonzero blocks'' in the last version}
in $\bm{x}$ by solving the following optimization problem
\begin{equation}
\label{Eq:Lp0 model}
    \hat{\bm{x}} = \arg \min_{\bm{x}} \| \bm{x} \|_{i,0} , \text{s.t.} \ \bm{y} = \bm{\Psi x},
\end{equation}
where $i\geq 0$ and $\| \bm{x} \|_{i,0} := \sum_{q=0}^{N-1}{\left\|\|\bm{x}_q\|_i\right\|_0}$ is the mixed $\ell_{i,0}$ norm. However, solving \eqref{Eq:Lp0 model} is still NP-hard \cite{Elhamifar2012TSP}.
To efficiently solve \eqref{Eq:Lp0 model}, convex relaxation that applies $\ell_{2,1}$ norm can be used, i.e.,
\begin{equation}
\label{Eq:noiseless L21 model}
    \hat{\bm{x}} = \arg \min_{\bm{x}} \| \bm{x} \|_{2,1} , \text{s.t.} \ \bm{y} = \bm{\Psi x},
\end{equation}
where $\| \bm{x} \|_{2,1} := \sum_{q=0}^{N-1}{\|\bm{x}_q\|_2}$. In a noisy case \eqref{Eq:matrix form model}, a so-called Block-Lasso \cite{Yuan06modelselection} method is usually adopted as
\begin{equation}
\label{Eq:noisy L21 model}
    \hat{\bm{x}} = \arg \min_{\bm{x}} \frac{1}{2} \|\bm{y} - \bm{\Psi x}\|_2^2 + \lambda\| \bm{x} \|_{2,1},
\end{equation}
where $\lambda > 0$ is the weight coefficient for regularization.

By generalizing the notion of coherence or RIP to block setting, many works study  conditions under which \eqref{Eq:noiseless L21 model} yields correct reconstruction of block sparse $\bm x$, including \cite{Eldar2010TSP,5290295,eldar2010TITAverage,Baraniuk2010,Elhamifar2012TSP,Bajwa2015} to name a few.

Among these works, we adopt the average-case analysis framework provided in a more recent paper \cite{Bajwa2015}, for its explicitly computable conditions on $\bm \Psi$ in contrast to the classical setup.  % and less pessimistic bounds on dimensions of $\bm Spi$.
%The majority of these studies characterize conditions and properties of the measurement matrix $\bm{\Psi}$, including block coherence \cite{Bajwa2015, Eldar2010}, block restricted isometry property \cite{5290295} and mutual subspace incoherence \cite{Ganesh2009ICASSP}.
As opposite to the conventional analyses that consider to recover an arbitrary $K$-block-sparse $\bm x$, \cite{Bajwa2015} resorts to an average-case analysis by imposing a mild statistical prior on $\bm x$. We repeat a concise version of these mild statistical constraints as the following:
\begin{itemize}
    \item[M1)] The block support of $\bm{x}, \mathbb{T} := \left\{ q: \bm{x}_q \ne \bm 0\right\}$, has a uniform distribution over the all $\left( \substack{N \\ K}\right)$ possible $K$-subsets of $\mathbb{N}$,
    \item[M2)] Entries in $\bm x$ have zero median: $\mathds{E}\left[ {\rm sign}(\bm x)\right] = \bm 0$, where ${\rm sign}(x) = x/|x|$ denotes entry-wise sign operation, and
    \item[M3)] Nonzero blocks of block-sparse signal $\bm{x}$ have statistically independent ``directions''.
    %\Answer{Note that we consider the complex vector space, so I think the statement is not right.}\Verify{Good point! I will check with Xingyu.}
    %\item[M3)] Nonzero blocks of block-sparse signal $\bm{x}$ have statistically independent ``directions''. %, i.e., it is required $\mathds{P}\left(\bigcap_{q\in \mathbb{T}}\left(\overline{\text{sign}}\left(\bm{x}_q\right) \in \mathds{A}_q\right) \right) = \Pi_{q\in \mathbb{T}}\mathds{P}\left(\overline{\text{sign}}\left(\bm{x}_q\right) \in \mathds{A}_q \right)$ where $ \mathds{A}_q \subset{\mathbb{S}^{M-1}}$ with $\mathbb{S}^{M-1}$ denoting the unit sphere in $\mathbb{C}^{M}$.
\end{itemize}
%Note that $\text{sign}\left(\bm{x}\right)$ denotes an entry-wise sign operation, i.e., $\left[\text{sign}\left(\bm{x}\right)\right]_l := {\left[\bm{x}\right]_l}/{\left|\left[\bm{x}\right]_l\right|}$ for $\left[\bm{x}\right]_l \ne 0$ and $0$ otherwise. And $\overline{\text{sign}}\left(\bm{x}\right)$ is the normalized vector as $\overline{\text{sign}}\left(\bm{x}\right) := \text{sign}\left(\bm{x}\right)/\|\text{sign}\left(\bm{x}\right)\|_2$.

We also inherit from \cite{Bajwa2015} the definitions of intra-block coherence, i.e.,
\begin{equation}
\label{Eq:intra-block coherence}
    \mu_I := \max_{ q \in \mathbb{N} } \| \bm{\Psi}_q^{H}\bm{\Psi}_q - \bm{I}_M \|_s,
\end{equation}
and inter-block coherence, given by
\begin{equation}
\label{Eq:inter-block coherence}
    \mu_B := \max_{ q_1, q_2 \in \mathbb{N}, q_1 \neq q_2 } \| \bm{\Psi}_{q_1}^{H}\bm{\Psi}_{q_2} \|_s.
\end{equation}
%The measurement matrix $\bm \Psi$ is said to satisfy the block incoherence condition (BIC) with parameters $(c_1, c_2)$ if $\mu _I \le c_1$ and $\mu_B \le c_2/ \log MN$\Verify{Please verify the expression of $\mu_B$.} for some positive numerical constants $c_1$ and $c_2$ \cite{Bajwa2015}.
With these definitions, \cite{Bajwa2015} provides the following theorem that guarantees the unique solution of block sparse recovery \eqref{Eq:noiseless L21 model}.
%Here, we suppose that the signal $\bm{x}$ satisfies some mild conditions built in \cite{Bajwa2015}. Then, we give the performance guarantee as follows.
\begin{theorem}[\hspace{1sp}\cite{Bajwa2015}]
\label{theorem:origin}
Suppose that $\bm{x}$ is $K$-block sparse, drawn according to the statistical model M$1$-M$3$, and is observed according to (\ref{Eq: matrix form model noiseless}). Then, as long as the block coherence of the matrix $\bm{\Psi}$ satisfy
\begin{equation}
\label{Eq: original sufficient condition}
\begin{split}
17\sqrt{\frac{K\log{\left(MN\right)}\left(1 + \mu_I\right)}{N}}&\|\bm{\Psi}\|_s \\
+ 48\mu_B\log\left(MN\right) +& \frac{2K}{N}\|\bm{\Psi}\|^2_s + 3\mu_I \leq \frac{1}{4},
\end{split}
\end{equation}
solving \eqref{Eq:noiseless L21 model} results in $\hat{\bm{x}} = \bm{x}$ with probability at least $1 - 4\left( MN \right)^{-4\log2}$, with respect to the random choice of the subset $\mathbb{T}$.
\end{theorem}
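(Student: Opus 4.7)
The plan is to prove Theorem~\ref{theorem:origin} via the standard \emph{dual-certificate} recipe for mixed $\ell_{2,1}$ minimization, adapted to the block-coherence setting. Writing $\bm{\Psi}_{\mathbb{T}} := [\bm{\Psi}_q]_{q \in \mathbb{T}}$ for the sub-dictionary supported on $\mathbb{T}$, the KKT conditions for (\ref{Eq:noiseless L21 model}) assert that $\bm{x}$ is the unique minimizer provided (i) $\bm{\Psi}_{\mathbb{T}}^H \bm{\Psi}_{\mathbb{T}}$ is invertible, and (ii) there exists a vector $\bm{v} \in \mathbb{C}^N$ with $[\bm{\Psi}^H \bm{v}]_q = \bm{x}_q / \|\bm{x}_q\|_2$ for every $q \in \mathbb{T}$ and $\|[\bm{\Psi}^H \bm{v}]_q\|_2 < 1$ for every $q \notin \mathbb{T}$. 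The task is thus to exhibit such a dual certificate with probability at least $1 - 4(MN)^{-4\log 2}$ over the joint randomness in $\mathbb{T}$ (from M1) and the block directions of $\bm{x}$ (from M2, M3).

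The natural candidate is $\bm{v} = \bm{\Psi}_{\mathbb{T}} (\bm{\Psi}_{\mathbb{T}}^H \bm{\Psi}_{\mathbb{T}})^{-1} \bm{e}$, where $\bm{e}$ stacks the unit directions $\bm{x}_q/\|\bm{x}_q\|_2$ on $\mathbb{T}$. Condition (i), together with the auxiliary bound $\|(\bm{\Psi}_{\mathbb{T}}^H\bm{\Psi}_{\mathbb{T}})^{-1}\|_s \leq 2$, would be handled by the decomposition
\begin{equation*}
\bm{\Psi}_{\mathbb{T}}^H \bm{\Psi}_{\mathbb{T}} = \bm{I}_{KM} + (\bm{D} - \bm{I}_{KM}) + \bm{O},
\end{equation*}
in which $\bm{D}$ is block-diagonal with $\bm{\Psi}_q^H \bm{\Psi}_q$ on its diagonal (so $\|\bm{D} - \bm{I}_{KM}\|_s \leq \mu_I$ by (\ref{Eq:intra-block coherence})) and $\bm{O}$ collects the off-diagonal blocks. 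Under the uniform-support model M1, $\|\bm{O}\|_s$ can be controlled by a non-commutative Bernstein / matrix-Chernoff inequality in terms of $\mu_B$, $\|\bm{\Psi}\|_s$, and $K/N$; this is the origin of the terms $\mu_B \log(MN)$ and $(K/N)\|\bm{\Psi}\|_s^2$ that appear in (\ref{Eq: original sufficient condition}).

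For condition (ii), fix $q \notin \mathbb{T}$ and write
\begin{equation*}
[\bm{\Psi}^H \bm{v}]_q = \bm{\Psi}_q^H \bm{\Psi}_{\mathbb{T}} (\bm{\Psi}_{\mathbb{T}}^H \bm{\Psi}_{\mathbb{T}})^{-1} \bm{e}.
\end{equation*}
Conditioning on $\mathbb{T}$, the vector $\bm{e}$ is a sum of independent unit-norm random directions by M2, M3, so a vector-valued Hoeffding / non-commutative Khintchine inequality bounds $\|[\bm{\Psi}^H \bm{v}]_q\|_2$ by the spectral norm of $\bm{\Psi}_q^H \bm{\Psi}_{\mathbb{T}} (\bm{\Psi}_{\mathbb{T}}^H \bm{\Psi}_{\mathbb{T}})^{-1}$. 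That norm splits into an inter-block piece driven by $\mu_B$ and a random-submatrix piece driven by $\sqrt{K(1+\mu_I)/N}\,\|\bm{\Psi}\|_s$. Union-bounding over the off-support blocks and the $M$-dimensional direction within each block yields the $\log(MN)$ factors, and the numerical constants $17,\,48,\,2,\,3$ in (\ref{Eq: original sufficient condition}) arise from merging these estimates at failure level $4(MN)^{-4\log 2}$.

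The step I expect to be the main obstacle is precisely the vector concentration used for condition (ii): scalar Hoeffding bounds cease to be sharp once the block size $M$ exceeds one, so a non-commutative Khintchine-type inequality is indispensable, and its constants must be tracked in tandem with the matrix-Chernoff estimate used in (i). In particular, isolating the term $(2K/N)\|\bm{\Psi}\|_s^2$ reflects a second-moment bound on $\|\bm{O}\|_s$ that must be folded into the Khintchine tail without losing the $\sqrt{\log(MN)}$ scaling in front of $\|\bm{\Psi}\|_s$ — keeping these two random mechanisms compatible is where most of the bookkeeping lives.
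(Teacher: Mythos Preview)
Your sketch is a reasonable reconstruction of the dual-certificate argument behind average-case block-sparse recovery guarantees, and the ingredients you list (invertibility of $\bm{\Psi}_{\mathbb{T}}^H\bm{\Psi}_{\mathbb{T}}$ via a matrix-Bernstein bound on the off-diagonal blocks, a Khintchine-type tail for the off-support inner products, union bound over $MN$ columns) are indeed the ones that produce the four terms in \eqref{Eq: original sufficient condition}.

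However, the paper does not attempt to prove Theorem~\ref{theorem:origin} at all: its entire proof reads ``See \cite[(5) and Thm.~1 and 2]{Bajwa2015}.'' The result is imported wholesale from that reference, so there is nothing in the present paper to compare your argument against. If your goal is to match the paper, the appropriate response is simply to cite \cite{Bajwa2015}; if your goal is to supply the missing details, what you have is a credible outline of the Bajwa--Mixon proof, but you should be aware that none of the bookkeeping you describe (tracking constants through the non-commutative Khintchine step, reconciling it with the second-moment control of $\|\bm{O}\|_s$) is carried out or even summarized in this paper.
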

\begin{proof}
See  \cite[(5) and Thm. 1 and 2]{Bajwa2015}.
\end{proof}

Next, based on Theorem \ref{theorem:origin}, especially the so-called block incoherence condition \eqref{Eq: original sufficient condition}, we analyze the block coherence of $\bm \Psi$, and establish the corresponding unique recovery condition on the block sparsity $K$ in RSFR.
The condition \eqref{Eq: original sufficient condition} imposes a joint constrain on the block coherence $\mu_I$, $\mu_B$, the spectral norm $\|\bm{\Psi}\|_s$ and the block sparsity $K$. We note that these parameters $\mu_I$, $\mu_B$ and $\|\bm{\Psi}\|_s$ are not trivial extensions of the traditional coherence used in \cite{Huang2018TSP}, but rely on the structure of the block matrices $\bm \Psi_q$. %\Answer{How about change to ``the matrix $\bm \Psi$''? In my opinion, it is the structure of $\bm{\Psi}^H \bm{\Psi}$: the block circulant matrix with circulant blocks.- I suggest not. Here, we are emphasizing the difference. If we say $\Psi$, the novelty is not clear. The key point, as I understand, lies in the circulant structure of some Gram matrix.}
The novelty vis-$\grave{a}$-vis the reference \cite{Huang2018TSP} lies in revealing and leveraging the particular structure of these blocks in $\bm \Psi$.
%For certain parameters $\mu_I, \mu_B$ and $\|\bm{\Psi}\|_s$ in a BSR problem, the block sparsity $K$ is required to be small enough to achieve a guaranteed recovery performance.

%In RSFRs, the synthetic bandwidth $B$ is wide and the underlying target scene to recover is usually extended, e.g. the relatively large aircraft and warship and the clutter from lands, clouds and seas. Therefore targets with different velocities corresponds to different blocks of $\bm{x}$. In addition, because the CRR bin size is not so large, there are usually a few targets in a RSFR CRR bin. In summary, the target scene to recover , i.e. $\bm{x}$, is in fact block sparse, which lays the foundation of block sparse recovery. However, as it stated in \cite{Elhamifar2012TSP}, a block-sparse signal is not necessarily sparse. The existing works \cite{4606609,Huang2018TSP,8632943} do not consider the extended-target property and only utilize the sparsity of $\bm{x}$, which may lead to performance decrease in practical RSFR applications.

\section{Performance Analysis for RSFR}
\label{sec:condition}
In this section, we analyze the range-Doppler reconstruction performance of RSFR using block sparse recovery based on Theorem \ref{theorem:origin}, which involves the block coherence $\mu_I$ and $\mu_B$, and the spectral norm of the overall observation matrix $\left\| \bm \Psi\right\|_s$. Since the carrier frequency for each radar pulse is randomized, the observation matrix $\bm{\Psi}$ is random. As a consequence, we start analyzing the probabilistic characters of $\mu_I$ and $\mu_B$ in Subsection \ref{subsec:uIuB}, followed by the calculation of $\left\| \bm \Psi\right\|_s$ present in Subsection \ref{subsec:psis}. Given these results on $\mu_I$, $\mu_B$ and  $\left\| \bm \Psi\right\|_s$, we then develop conditions that ensure unique recovery exploiting block sparsity in Subsection \ref{subsec:bound}.

In order to facilitate the analysis, we follow the typical setting in \cite{Costas1984IEEEProceedings},
%\Verify{These is a paper by Costas also applying this assumption. It could be more convincing if we cite others' work.}
assuming $\xi_n = 1$, throughout this section, so that the $n$-th entry of the observation vector \eqref{Eq: observation vector} can be simplified as follows
\begin{equation}
\label{Eq: simplified observation vector}
\left[\bm{\psi}_{p,q}\right]_n = \frac{1}{\sqrt{N}}e^ {j\frac{2\pi p}{M} C_n + j\frac{2\pi q}{N} n}.
\end{equation}
In fact, this assumption is to neglect the Doppler-shift differences of different carrier frequencies, which holds when the relative bandwidth $B/f_c$ is negligible. However, when we consider a RSFR with large (synthetic) bandwidth, this approximation does not usually hold unless the initial frequency $f_c$ is sufficiently high. In the simulations and field experiments as presented in Section \ref{sec:experiment}, the signal processing algorithms do not adopt this assumption. The impact of the relative bandwidth will be discussed in the simulation section.
%Generally, different carrier frequencies imply different Doppler shifts, unless the relative bandwidth $B/f_c$ is negligible, i.e. $\xi_n \approx 1$. However, in a (synthetic) wideband radar, this approximation does not usually hold, and could give rise to estimation performance deterioration in practice if applied. In the simulations and field experiments in Section V, the signal processing algorithms do not adopt this assumption. However, in the mathematical analysis in Section IV, we assume $\xi_n \approx 1$ for theoretical convenience. The impact of the relative bandwidth will be discussed in the simulations.

\subsection{Analysis on block coherence $\mu_I$ and $\mu_B$}
\label{subsec:uIuB}

According to the definitions  \eqref{Eq:intra-block coherence} and \eqref{Eq:inter-block coherence}, the block coherence $\mu_I$ and $\mu_B$ depend on the singular values of the matrix product $\bm{\Psi}_{q_1}^H \bm{\Psi}_{q_2}$ ($q_1 = q_2$ for $\mu_I$ and $q_1 \ne q_2$ for $\mu_B$). It is usually difficult to analyze singular values of a highly structured random matrix. Fortunately, the matrix products $\bm \Psi_{q_1}^H \bm \Psi_{q_2}$ are circulant matrices, as will be shown in the sequel, which enables us to obtain the closed-form expressions of their singular values with respect to the random carrier frequencies. Based on these analytical results, we then derive the statistical characters of the singular values and the consequent block coherence.

For the sake of clear presentation, we introduce the following notation. Let $\bm X$ and $\bm{X}_{q_1,q_2}$ be matrix products, particularly, $\bm{X} := \bm{\Psi}^H\bm{\Psi} \in \mathbb{C}^{MN\times MN}$ and $\bm{X}_{q_1,q_2} := \bm{\Psi}_{q_1}^H\bm{\Psi}_{q_2} \in \mathbb{C}^{M\times M}$, $q_1, q_2 \in \mathbb{N} $. From the definition of $\bm{\Psi}$ \eqref{eq:psi with blocks}, it can be verified that
\begin{align}
\label{eq:X}
    \bm{X} = \left[
               \begin{array}{cccc}
                 \bm{X}_{0,0} & \bm{X}_{0,1}  & \cdots & \bm{X}_{0,N-1} \\
                 \bm{X}_{1,0} & \bm{X}_{1,1}  & \cdots & \bm{X}_{1,N-1} \\
                 \vdots & \vdots & \ddots & \vdots \\
                 \bm{X}_{N-1,0} & \bm{X}_{N-1,1}  & \cdots & \bm{X}_{N-1,N-1} \\
               \end{array}
               \right],
\end{align}
indicating that $\bm{X}_{q_1,q_2}$ are the blocks of $\bm{X}$. %When $q_1 = q_2 = q$, we have $\bm{X}_{q,q} = \bm{\Psi}_q^H\bm{\Psi}_q, 0 \leq q< N$. Note

Observing the definitions \eqref{Eq:intra-block coherence} and \eqref{Eq:inter-block coherence}, we define a variant of $\bm{X}_{q_1,q_2}$ as
\begin{align}
\label{Eq:Gram Matrix}
\overline{\bm{X}}_{q_1,q_2} :=\left\{
\begin{array}{lll}
\bm{X}_{q,q} - \bm{I}_M &  q_1 = q_2 = q, \\
\bm{X}_{q_1,q_2}    &  q_1 \neq q_2,
\end{array} \right.
\end{align}
so that $\mu_I$ and $\mu_B$ can be rewritten in an unified form as
\begin{equation}
\label{eq:uI_snorm}
    \mu_I = \max_{q \in \mathbb{N} } \left\| \overline{\bm{X}}_{q,q} \right\|_s,
\end{equation}
\begin{equation}
\label{eq:uB_snorm}
    \mu_B = \max_{\substack{q_1, q_2 \in \mathbb{ N}, \\ q_1 \neq q_2} } \left\| \overline{\bm{X}}_{q_1,q_2}\right\|_s,
\end{equation}
respectively.
As the spectral norm of a square matrix is highly related to its eigenvalues and singular values, we then denote by $\lambda$ and $\sigma$ the eigenvalue and singular value of a matrix, respectively. In particular, we use $\lambda_l$, $\lambda_m^{q_1,q_2}$ and $\bar{\lambda}_m^{q_1,q_2}$ ($\sigma_l$, $\sigma_m^{q_1,q_2}$ and $\bar{\sigma}_m^{q_1,q_2}$) to represent the $l$-th eigenvalue (singular value) of the matrix $\bm{X}$, and the $m$-th of $\bm{X}_{q_1,q_2}$ and $\overline{\bm{X}}_{q_1,q_2}$, respectively, $l \in \mathbb{L} := \{0, 1, \cdots, MN -1\}$, $m \in \mathbb{M}$.

With these notation, we now reveal in the subsequent Lemma \ref{lemma:1} that $\bm{X}_{q_1,q_2}$ is a circulant matrix, where each row is generated by moving the preceding row with one position to the right and wrapping around \cite{Davis1979}. This special structure will be later leveraged to derive the closed-form expressions of the eigenvalues $\lambda_m^{q_1,q_2}$. %\Verify{Only eigenvalues? Please verify.}.
\begin{lemma}
\label{lemma:1}
The matrix $\bm{X}_{q_1,q_2}$, $ q_1,q_2 \in \mathbb{ N}$, is a circulant matrix.
\end{lemma}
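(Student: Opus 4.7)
The plan is to compute the $(p_1,p_2)$-th entry of $\bm{X}_{q_1,q_2}$ explicitly and observe that, thanks to the integer-valuedness of the codes $C_n \in \mathbb{M}$, this entry depends only on the cyclic difference $(p_2 - p_1) \bmod M$, which is the defining property of a circulant matrix.

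First, I would recall that $\bm{X}_{q_1,q_2} = \bm{\Psi}_{q_1}^H \bm{\Psi}_{q_2}$ has entries $[\bm{X}_{q_1,q_2}]_{p_1,p_2} = \bm{\psi}_{p_1,q_1}^H \bm{\psi}_{p_2,q_2}$. Substituting the simplified form \eqref{Eq: simplified observation vector} of the observation vectors (valid in this section under $\xi_n = 1$) gives
\begin{equation*}
[\bm{X}_{q_1,q_2}]_{p_1,p_2} = \frac{1}{N}\sum_{n=0}^{N-1} e^{j\frac{2\pi (p_2 - p_1)}{M} C_n}\, e^{j\frac{2\pi (q_2-q_1)}{N} n}.
\end{equation*}
The second exponential does not involve $p_1$ or $p_2$, so the $p_1,p_2$ dependence is confined entirely to the first factor through the scalar $p_2 - p_1$.

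Next, I would argue that because every $C_n$ is an integer in $\mathbb{M}$, the factor $e^{j\frac{2\pi (p_2-p_1)}{M} C_n}$ is invariant under the substitution $p_2 - p_1 \mapsto (p_2 - p_1) + M$. Therefore the entire summation depends on $p_1, p_2$ only through $(p_2 - p_1) \bmod M$. Defining
\begin{equation*}
c_{q_1,q_2}(d) := \frac{1}{N}\sum_{n=0}^{N-1} e^{j\frac{2\pi d}{M} C_n}\, e^{j\frac{2\pi (q_2-q_1)}{N} n},\qquad d \in \mathbb{M},
\end{equation*}
I would conclude $[\bm{X}_{q_1,q_2}]_{p_1,p_2} = c_{q_1,q_2}\!\big((p_2-p_1)\bmod M\big)$, which is precisely the form of an $M\times M$ circulant matrix \cite{Davis1979}. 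This completes the proof.

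There is essentially no analytical obstacle: the whole argument is an observation about the algebraic structure of the inner product, and it hinges on a single fact (that the modulation codes are integers in $\mathbb{M}$, so the exponential in $p_2 - p_1$ has period $M$). The only thing worth stating carefully is the convention (whether ``circulant'' is defined via $p_2 - p_1$ or $p_1 - p_2$), but both conventions are equivalent up to transposition and the lemma holds under either. This circulant structure is exactly the leverage needed later, since the eigenvalues of a circulant matrix are the DFT of its first row, turning the subsequent analysis of $\mu_I$ and $\mu_B$ in \eqref{eq:uI_snorm}--\eqref{eq:uB_snorm} into a problem about sums of random exponentials $e^{j(2\pi d/M)C_n}$.
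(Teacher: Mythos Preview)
Your proof is correct and follows essentially the same approach as the paper: compute $[\bm{X}_{q_1,q_2}]_{p_1,p_2}$ explicitly from \eqref{Eq: simplified observation vector}, observe that the $(p_1,p_2)$-dependence enters only through $e^{j\frac{2\pi(p_2-p_1)}{M}C_n}$, and use the integrality of $C_n$ to conclude this is periodic in $p_2-p_1$ with period $M$. The paper phrases the conclusion as the two-case identity \eqref{eq:circulantXq1q2} rather than your ``depends only on $(p_2-p_1)\bmod M$'', but these are the same statement.
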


\begin{proof}
See Appendix \ref{app:lemma circulant}.
\end{proof}

Since the eigenvalues of a circulant matrix are discrete Fourier transformation of its first row \cite{Chan1996}, we now derive the analytical expression of the eigenvalues $\lambda_m^{q_1,q_2}$ as stated in Lemma \ref{lemma:eigenvalues}.

\begin{lemma}
\label{lemma:eigenvalues}
The eigenvalues $\lambda_m^{q_1,q_2}$ can be expressed as
\begin{equation}
\label{Eq:Expression of lambda_m_q_1_q_2}
   \lambda_m^{q_1,q_2}  =  \frac{M}{N}\sum_{n=0}^{N-1}\zeta_{n,m}e^{j2\pi\frac{q_2 - q_1}{N}n},
\end{equation}
where $\zeta_{n,m} = \delta \left(C_n - m\right)$.
\end{lemma}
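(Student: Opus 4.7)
The plan is to start from Lemma \ref{lemma:1}, which tells us that $\bm{X}_{q_1,q_2}$ is circulant, and then invoke the classical fact that the eigenvalues of an $M\times M$ circulant matrix are the $M$-point DFT of its first row \cite{Chan1996}. So the task reduces to: (i) write down the first row of $\bm{X}_{q_1,q_2}$ in closed form; (ii) take its DFT; (iii) recognize the inner sum as a geometric-series/orthogonality identity that collapses to the indicator $\zeta_{n,m}$.

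First I would compute a generic entry of $\bm{X}_{q_1,q_2}$. By definition $[\bm{X}_{q_1,q_2}]_{p_1,p_2}=\bm{\psi}_{p_1,q_1}^{H}\bm{\psi}_{p_2,q_2}$, and plugging in the simplified observation vector \eqref{Eq: simplified observation vector} gives
\begin{equation*}
[\bm{X}_{q_1,q_2}]_{p_1,p_2}=\frac{1}{N}\sum_{n=0}^{N-1} e^{\,j\frac{2\pi(p_2-p_1)}{M}C_n}\,e^{\,j\frac{2\pi(q_2-q_1)}{N}n}.
\end{equation*}
Setting $p_1=0$ yields the first row, whose $p$-th entry $c_p$ is the same expression with $p_2$ replaced by $p$.

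Next, using the DFT formula for circulant eigenvalues $\lambda_m^{q_1,q_2}=\sum_{p=0}^{M-1} c_p\, e^{-j2\pi m p/M}$, I would swap the order of summation to isolate the sum over $p$:
\begin{equation*}
\lambda_m^{q_1,q_2}=\frac{1}{N}\sum_{n=0}^{N-1} e^{\,j\frac{2\pi(q_2-q_1)}{N}n}\sum_{p=0}^{M-1} e^{\,j\frac{2\pi p(C_n-m)}{M}}.
\end{equation*}
Because $C_n,m\in\mathbb{M}$, the inner sum is the standard orthogonality of $M$-th roots of unity and equals $M$ when $C_n=m$ and $0$ otherwise, i.e. $M\,\delta(C_n-m)=M\zeta_{n,m}$. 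Substituting this back produces exactly \eqref{Eq:Expression of lambda_m_q_1_q_2}.

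I do not anticipate a real obstacle here; the only subtlety is matching the DFT sign convention to the one implicit in Lemma \ref{lemma:1} so that the phase $e^{j2\pi(q_2-q_1)n/N}$ comes out with the correct sign, and checking that the indexing over $\mathbb{M}$ makes the orthogonality sum exact (no fractional residue). Everything else reduces to interchanging two finite sums and applying a standard identity, so the proof should be short.
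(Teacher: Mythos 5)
Your proposal is correct and follows essentially the same route as the paper's proof: compute the first row of the circulant matrix $\bm{X}_{q_1,q_2}$ from \eqref{eq:entryXq1q2}, take its DFT, swap the order of summation, and collapse the inner sum via the orthogonality of $M$-th roots of unity to $M\,\delta(C_n-m)=M\zeta_{n,m}$. No gaps; the sign-convention and indexing concerns you flag are handled exactly as in the paper.
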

\begin{proof}
See Appendix \ref{app:lemma eigenvalues}.
\end{proof}

From \eqref{Eq:Expression of lambda_m_q_1_q_2}, we find that 1) the eigenvalue $\lambda_m^{q_1,q_2}$ is a random variable. The randomness comes from the randomly selected frequency code $C_n$. Recall that each frequency code obeys an i.i.d uniform distribution, i.e., $C_n \sim U \left(\mathbb{M} \right)$. We then have that $\zeta_{n,m}$ obeys a Bernoulli distribution  $\zeta_{n,m} \sim B\left(\frac{1}{M}\right)$, i.e.,
\begin{equation}
\label{eq:distribution}
\mathds{P}\left( \zeta_{n,m} = 0 \right) = 1 - \frac{1}{M}; \ \ \mathds{P}\left( \zeta_{n,m} = 1 \right) = \frac{1}{M}.
\end{equation}
And the random variables $\zeta_{n,m}, n \in \mathbb{N}$, are independent among each other for a fixed $m \in \mathbb{M}$.
This will be used later to derive the tail probability of the block coherence.

The result \eqref{Eq:Expression of lambda_m_q_1_q_2} also indicates that 2) the value of $\lambda_m^{q_1,q_2}$ depends on the difference $q_1$ and $q_2$, i.e. $\Delta q := q_2 - q_1$, and not on the particular values of block indices $q_1$ and $q_2$. Witnessing this, we replace the notation $\lambda_m^{q_1,q_2}$ with $\lambda_m^{\Delta q}$, $\Delta q \in \{\pm n\}_{n=0}^{N-1}$, for simplicity, which is given by
\begin{equation}
\label{Eq:Expression of lambda_m_delta q}
   \lambda_m^{\Delta q}  :=  \frac{M}{N}\sum_{n=0}^{N-1}\zeta_{n,m}e^{j2\pi\frac{\Delta q}{N}n}.
\end{equation}

We can further observe in \eqref{Eq:Expression of lambda_m_delta q} that 3) $\lambda_m^{\Delta q}$ has some particular conjugate-symmetric characters as stated in the following equations,
\begin{equation}
\label{Eq: conjugate symmetric structure1}
\lambda_m^{\Delta q} = \left(\lambda_m^{-\Delta q}\right)^*,
\end{equation}
\begin{equation}
\label{Eq: conjugate symmetric structure2}
\lambda_m^{\Delta q} = \lambda_m^{\Delta q \pm N},
\end{equation}
which imply that there are duplicated values in the magnitudes $\left\{ \left|\lambda_m^{\Delta q} \right| \right\}_{\Delta q \in \{\pm n\}_{n=0}^{N-1}}$, i.e.,
\begin{equation}
    \label{eq:duplicated magnitude}
    \left|\lambda_m^{\Delta q} \right| = \left|\lambda_m^{-\Delta q} \right| = \left|\lambda_m^{\Delta q \pm N} \right|, \Delta q \in \{\pm n\}_{n=0}^{N-1}.
\end{equation}

%Considering the distribution , one can find that
%From \eqref{Eq:Expression of lambda_m_q_1_q_2} we can find that the eigenvalues of the matrix $\bm{X}_{q_1,q_2}$ depends only on the difference of $q_1$ and $q_2$, i.e. $\Delta q = q_2 - q_1$, and they are independent of the particular values $q_1$ and $q_2$. For simplicity, we denote $\lambda_m^{q_1,q_2}$ and $\bar{\lambda}_m^{q_1,q_2}$ by $\lambda_m^{\Delta q}$ and $\bar{\lambda}_m^{\Delta q}$, $-N < \Delta q < N $, respectively. We note that the values of $\lambda_m^{\Delta q}$ have some particular conjugate-symmetric structure as stated in the following lemma.

%\begin{lemma}
%\label{lemma:eigenvalue symmetry}
%The eigenvalues $\lambda_m^{\Delta q}, -N < \Delta q < N$ satisfy
%\end{lemma}

%\begin{proof}
%The first equation in \eqref{Eq: conjugate symmetric structure} is apparent from the last equation in (\ref{Eq:lambda mq1q2}) as $\zeta_{n,m}$ is a real variable. For the second equation in \eqref{Eq: conjugate symmetric structure}, it holds that
%\begin{equation}
%\begin{split}
%\lambda_m^{\Delta q} &= \frac{M}{N}\sum_{n=0}^{N-1}\zeta_{n,m}e^{j2\pi\frac{\Delta q}{N}n}\\
%& = \frac{M}{N}\sum_{n=0}^{N-1}\zeta_{n,m}e^{j\left(2\pi\frac{\Delta q}{N} - 2\pi\right)n}\\
%& = \left(\lambda_m^{N-\Delta q}\right)^*.
%\end{split}
%\end{equation}
%Therefore the second equation in \eqref{Eq: conjugate symmetric structure} also holds.
%\end{proof}

Given the expression of $\lambda_m^{\Delta q}$, we then derive ${\bar\lambda}_m^{q_1, q_2}$ and ${\bar\sigma}_m^{q_1, q_2}$. Invoking the definition \eqref{Eq:Gram Matrix} directly implies that ${\bar\lambda}_m^{q_1, q_2} = {\lambda}_m^{q_1, q_2} = {\lambda}_m^{\Delta q}$  for $q_1 \ne q_2$, and for $q_1 = q_2 = q$, ${\bar\lambda}_m^{q, q} = {\lambda}_m^{q, q} - 1 = {\lambda}_m^{0} - 1$. Thus, we find that ${\bar\lambda}_m^{q_1, q_2}$ also relies on the difference $\Delta q$ . Similarly, we define $\bar{\lambda}_m^{\Delta q} := {\bar\lambda}_m^{q_1, q_2}$ for $\Delta q = q_2 - q_1$, which can be expressed as
\begin{align}
\label{Eq: relation chi and chi_bar}
\bar{\lambda}_m^{\Delta q} =\left\{
\begin{array}{lll}
{\lambda}_m^{0} - 1   & \Delta q = 0, \\
{\lambda}_m^{\Delta q}&  \Delta q \neq 0.
\end{array} \right.
\end{align}
%It can also be verified that the symmetry stated in \eqref{Eq: conjugate symmetric structure1} and \eqref{Eq: conjugate symmetric structure2} also hold for $\bar{\lambda}_m^{\Delta q}$.

%We then calculate the singular values of $\overline{\bm{X}}_{q_1,q_2}$, denoted by $\bar{\sigma}_m^{\Delta q}$.
Since $\overline{\bm{X}}_{q_1,q_2}$ is also a circulant matrix, the singular values are given by the magnitudes of the eigenvalues\cite{Vybiral2011},
\begin{equation}
\label{Eq: relation between siagma and lambda}
\begin{split}
\bar{\sigma}_m^{q_1,q_2} = \left|\bar{\lambda}_m^{q_1,q_2}\right|,
\end{split}
\end{equation}
which indicates that the singular value $\bar{\sigma}_m^{q_1,q_2}$ also depends on $\Delta q$ and can be rewritten as $\bar{\sigma}_m^{\Delta q}$. Combing \eqref{Eq: relation chi and chi_bar} and \eqref{Eq: relation between siagma and lambda} yields
\begin{equation}
\label{eq:singular eigenvalue}
    \bar{\sigma}_m^{\Delta q} =\left\{
\begin{array}{lll}
\left|{\lambda}_m^{0} - 1\right|   & \Delta q = 0, \\
\left| {\lambda}_m^{\Delta q} \right| &  \Delta q \neq 0.
\end{array} \right.
\end{equation}
Invoking the fact that the spectral norm of a matrix equals the maximum singular values, we can rewrite the intra-block \eqref{eq:uI_snorm} and inter-block coherence \eqref{eq:uB_snorm} with respect to the singular value $\bar{\sigma}_m^{\Delta q}$ as
\begin{equation}
\label{eq:uI singular}
    \mu_I = \max_{q \in \mathbb{N} } \max_{m \in \mathbb{M} } \bar{\sigma}_m^{0} = \max_{m \in \mathbb{M} } \bar{\sigma}_m^{0},
\end{equation}
\begin{equation}
\label{eq:uB_singular}
    \mu_B = \max_{\Delta q \in \left\{ \pm n \right\}_{n =1}^{N-1} } \max_{m \in \mathbb{M} } \bar{\sigma}_m^{\Delta q},
\end{equation}
respectively.
Here, regarding \eqref{eq:uB_singular}, we note that among these $2N-2$ elements in $\bar{\sigma}_m^{\Delta q}$, $\Delta q \in \left\{ \pm n \right\}_{n =1}^{N-1} $, there are at most $\lfloor N/2\rfloor$ %\Verify{Please verify.}\Answer{I think is right! Because $\lceil \left(N-1\right)/2\rceil = \lfloor N/2\rfloor$.}
unique values, i.e., $\bar{\sigma}_m^{\Delta q}$, $\Delta q \in \left\{ n \right\}_{n =1}^{\lfloor N/2\rfloor} $. This is a consequence of applying \eqref{eq:singular eigenvalue} and \eqref{eq:duplicated magnitude}. Then, \eqref{eq:uB_singular} becomes
\begin{equation}
\label{eq:uB_singular compact}
    \mu_B = \max_{\Delta q \in \left\{ n \right\}_{n =1}^{\lfloor N/2\rfloor}} \max_{ m \in \mathbb{M}} \bar{\sigma}_m^{\Delta q}.
\end{equation}
% \begin{equation}
% \label{eq:uB_singular compact}
%     \mu_B = \max_{\substack{\Delta q \in \left\{ n \right\}_{n =1}^{\lfloor N/2\rfloor}, \\
%     m \in \mathbb{M}} } \bar{\sigma}_m^{\Delta q}.
% \end{equation}
%According to (\ref{Eq: relation between siagma and lambda}) and the Lemma \ref{lemma:eigenvalue symmetry} for $\bar{\lambda}_m^{\Delta q}$, we conclude that $\bar{\sigma}_m^{\Delta q}$ only takes unique values when $0 \leq \Delta q \leq \lceil N/2 + 1\rceil$.

Note that the singular values are random variables, and a bound of their tail probabilities are presented in Theorem \ref{theorem:tailsingular}. With the obtained results, we will later analyze the probabilistic characters of the block coherence $\mu_I$ and $\mu_B$.

\begin{theorem}
\label{theorem:tailsingular}
For $\epsilon \leq 1$ and $\Delta q \in \left\{ n \right\}_{n = 0}^{\lfloor N/2\rfloor}$, the singular values $\bar{\sigma}_m^{\Delta q}$ satisfy
\begin{equation}
\label{Eq: bound of bar sigma}
\mathds{P}\left( \bar{\sigma}_m^{\Delta q} > \sqrt{\frac{M-1}{N}} + \epsilon \right) < e^{-\frac{N}{4\left(M-1\right)}\epsilon^2}.
\end{equation}
\end{theorem}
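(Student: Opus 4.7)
The plan is to cast $\bar\sigma_m^{\Delta q}$ as the modulus of a centered sum of independent random variables, bound its expectation by the square root of its second moment, and then concentrate around that mean using a Bernstein-type inequality that exploits the small Bernoulli variance $(M-1)/M^2$.

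First, I would unify the two branches of \eqref{eq:singular eigenvalue} by writing $\bar\sigma_m^{\Delta q} = |Y|$ where
\begin{equation*}
Y := \frac{M}{N}\sum_{n=0}^{N-1}\left(\zeta_{n,m} - \frac{1}{M}\right) e^{j 2\pi \Delta q n/N}.
\end{equation*}
For $\Delta q = 0$ this recovers $\lambda_m^0 - 1$, and for $\Delta q \in \{1,\dots,\lfloor N/2\rfloor\}$ the subtracted geometric sum $(1/N)\sum_n e^{j2\pi\Delta q n/N}$ vanishes, so $Y$ reduces to $\lambda_m^{\Delta q}$; hence $|Y|$ equals $\bar\sigma_m^{\Delta q}$ in both cases. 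The advantage of this form is that $\mathbb{E}[Y]=0$ and the summands are independent.

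Next, invoking the independence of the $\zeta_{n,m}$ across $n$ together with $\mathrm{Var}(\zeta_{n,m}) = (M-1)/M^2$ from \eqref{eq:distribution}, I would compute the second moment
\begin{equation*}
\mathbb{E}\!\left[|Y|^2\right] = \frac{M^2}{N^2}\sum_{n=0}^{N-1}\mathrm{Var}(\zeta_{n,m}) \left|e^{j2\pi \Delta q n/N}\right|^2 = \frac{M-1}{N},
\end{equation*}
and then apply Jensen's inequality to conclude $\mathbb{E}[\bar\sigma_m^{\Delta q}] = \mathbb{E}[|Y|] \leq \sqrt{(M-1)/N}$. Since the target event is that $\bar\sigma_m^{\Delta q}$ exceeds $\sqrt{(M-1)/N}+\epsilon$, it suffices to show that $|Y|$ deviates from its own expectation by more than $\epsilon$ with probability at most $\exp(-N\epsilon^2/(4(M-1)))$.

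For this concentration step, the function $|Y|$ depends on the $N$ independent Bernoulli variables $\zeta_{0,m},\dots,\zeta_{N-1,m}$, and flipping any single $\zeta_{n,m}$ changes $Y$ by a complex number of modulus $M/N$, making $|Y|$ Lipschitz with constant $M/N$ per coordinate. Plain McDiarmid only yields $\exp(-2N\epsilon^2/M^2)$, which is too loose by a factor of order $M$; the required improvement comes from a variance-sensitive refinement such as Bousquet's inequality, or equivalently from Bernstein's inequality applied to the real and imaginary parts of $Y$ separately. The relevant per-coordinate increment variance is $(M/N)^2 \cdot 2(M-1)/M^2 = 2(M-1)/N^2$ (obtained by combining the Lipschitz modulus $M/N$ with the variance of the Bernoulli increment), summing to $2(M-1)/N$, which delivers exactly the sub-Gaussian factor $4(M-1)$ in the denominator of the exponent for $\epsilon \leq 1$, a regime in which the Bernstein correction term is dominated by the Gaussian term.

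The main obstacle will be pinning down the constant $4(M-1)$: a naive bounded-differences or Hoeffding argument yields $(M-1)^2$ in the denominator instead, because it uses the unit range of $\zeta_{n,m}$ rather than its much smaller variance $(M-1)/M^2$. Bridging this gap requires a Bernstein-type bound that separates the variance contribution from the peak Lipschitz constant, and then restricting to the small-$\epsilon$ regime $\epsilon \leq 1$ specified in the hypothesis.
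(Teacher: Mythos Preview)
Your approach is essentially the paper's: both center the sum via $J_n = \frac{M\zeta_{n,m}-1}{N}e^{j2\pi\Delta q n/N}$, both compute the total variance $V=\sum_n\mathds{E}|J_n|^2=(M-1)/N$, and both finish with a Bernstein-type inequality. The only difference is packaging. The paper does not pass through the expectation of $|Y|$ at all; it invokes a Bernstein inequality for sums of independent, bounded, mean-zero complex variables that directly yields $\mathds{P}\bigl(|\sum_n J_n|>\sqrt{V}+\epsilon\bigr)<e^{-\epsilon^2/(4V)}$ for $\epsilon\le V/\max_n|J_n|=1$, so the offset $\sqrt{(M-1)/N}$ and the constant $4(M-1)$ drop out immediately without Jensen and without splitting into real and imaginary parts. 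Your detour via $\mathds{E}[|Y|]\le\sqrt{V}$ followed by concentration of $|Y|$ about its own mean is valid in principle, but the real/imaginary split you propose would typically cost a factor of $2$ from a union bound and possibly a further $\sqrt{2}$ in the threshold, so landing exactly on $4(M-1)$ that way requires more care than the direct complex-sum Bernstein application the paper uses.
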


\begin{proof}
See Appendix \ref{app:thm tail singular}.
\end{proof}

Given \eqref{Eq: bound of bar sigma}, we derive a probability bound on $\mu_I$ by applying the union bound to \eqref{eq:uI singular}. In particular, let $c_1 = \sqrt{\frac{M-1}{N}} + \epsilon$, and we have
\begin{equation}
\label{Eq:uI bound}
\begin{split}
\mathds{P}\left(\mu_I > c_1\right) & = \mathds{P}\left(\max_{m \in \mathbb{M}}{\bar{\sigma}_m^{0}} > c_1\right)\\
      & \leq \sum_{m \in \mathbb{M} }\mathds{P}\left({\bar{\sigma}_m^{0}} > c_1\right)\\
      & < M e^{-\frac{N}{4\left(M-1\right)}\epsilon^2}\\
      & = M e^{\tiny{-\frac{\left(\sqrt{N}c_1 - \sqrt{M-1}\right)^2}{4\left(M-1\right)}}}.
\end{split}
\end{equation}
Regarding with $\mu_B$ in \eqref{eq:uB_singular compact}, we follow the same technique in \eqref{Eq:uI bound}, and obtain the subsequent bound
\begin{equation}
\label{Eq:uB bound}
\mathds{P}\left(\mu_B > c_2\right)< M \lfloor N/2 \rfloor e^{\tiny{-\frac{\left(\sqrt{N}c_2 - \sqrt{M-1}\right)^2}{4\left(M-1\right)}}}.
\end{equation}

Probability bounds \eqref{Eq:uI bound} and \eqref{Eq:uB bound} characterize the block coherence of the observation matrix $\bm \Psi$. To establish a sufficient condition for exact recovery that uses $\bm \Psi$, we calculate its spectral norm $\left\| \bm \Psi \right\|_s$ in the next subsection.
\subsection{Derivation of \texorpdfstring{$\left\|\bm \Psi \right\|_s$}{textpsinorm}}
\label{subsec:psis}

Despite of the randomness, we find that $\bm \Psi$ has a determinate spectral norm $\left\|\bm \Psi \right\|_s = \sqrt{M}$. To reveal this, we start by analyzing the structure of the Gram matrix $\bm X = \bm \Psi^H \bm \Psi$, since the singular values of a matrix correspond to the eigenvalues of its Gram matrix \cite{zhang2017matrix}. Particularly, the $l$-th singular value of $\bm \Psi$ satisfies%\Verify{Please verify.}
\begin{equation}
\label{eq:sqrt eigenvalue}
    \sigma_l(\bm \Psi) = \sqrt{\lambda_l(\bm X)} = \sqrt{\lambda_l}, \quad l \in \mathbb{L}.
\end{equation}
Leveraging a particular structure (as will be stated in Lemma \ref{lemma:block circulant}), we then derive the analytical form of the eigenvalues $\lambda_l$, which completes the calculation of $\left\|\bm \Psi \right\|_s$.

\iffalse
The spectral norm of $\bm{\Psi}$ is closely related with the eigenvalue of the matrix $\bm{X}$ because
\begin{equation}
\label{Eq: spectral norm derive base}
\begin{split}
\|\bm{\Psi}\|_s &= \sigma_{\max}\left(\bm{\Psi}\right) \  = \sqrt{\sigma_{\max}\left(\bm{\Psi}^H\bm{\Psi}\right)}\\
                &= \max_{0 \leq l < N}{\sqrt{\sigma_l}} = \max_{0 \leq l < N}{\lambda_l}.
\end{split}
\end{equation}
Further, one can find in the subsequent analysis that $\lambda_l, 0 \leq l < N$ can be calculated theoretically because $\bm{X}$ is also with a special structure as it states in Lemma \ref{lemma:block circulant}.
\fi

From \eqref{eq:X}, we find that 1) $\bm X$ has circulant blocks, since each block $\bm X_{q_1,q_2}$ is a circulant matrix as revealed in Lemma \ref{lemma:1}. Besides this, $\bm X$ has an additional structure. As we will prove later in the Appendix \ref{app:block circulant}, %\Answer{ Appendix \ref{app:block circulant} has been added. Note that the proof is so similar with that of Lemma 1.}
each (block) row of $\bm X$ is a right cyclic shift of the row above it, i.e., for $q_1 = 1, 2, \dots, N-1$,
\begin{align}
\label{Eq: block circulant}
\bm{X}_{q_1,q_2} =\left\{
\begin{array}{lll}
\bm{X}_{q_1 - 1, q_2 - 1}      &   q_2 = 1, 2, \dots, N-1, \\
\bm{X}_{q_1 - 1, N-1}  &   q_2 = 0,
\end{array} \right.
\end{align}
or equivalently, for $q_1,q_2 \in \mathbb{N}$,
\begin{align}
\label{Eq: block circulant2}
\bm{X}_{q_1,q_2} =\left\{
\begin{array}{lll}
\bm{X}_{0,q_2-q_1}      &   q_2 \geq q_1, \\
\bm{X}_{0,q_2-q_1 + N}  &   q_2 < q_1,
\end{array} \right.
\end{align}
which indicates that 2) $\bm X$ is a block criculant matrix. Combining 1) and 2) implies the following Lemma \ref{lemma:block circulant}.

\begin{lemma}
\label{lemma:block circulant}
The matrix $\bm{X}$ is a block circulant matrix with circulant blocks.
\end{lemma}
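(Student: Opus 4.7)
The plan is to verify the block circulant property directly from the entrywise definition of the Gram matrix $\bm X = \bm \Psi^H \bm \Psi$, since the circulant-blocks half is already supplied by Lemma \ref{lemma:1}. So the work reduces to establishing the relation \eqref{Eq: block circulant}.

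First I would write out the $(p_1,p_2)$-entry of $\bm X_{q_1,q_2} = \bm \Psi_{q_1}^H \bm \Psi_{q_2}$. Using the simplified column formula \eqref{Eq: simplified observation vector}, this is
\begin{equation*}
[\bm X_{q_1,q_2}]_{p_1,p_2} \;=\; \frac{1}{N}\sum_{n=0}^{N-1} e^{\,j\frac{2\pi(p_2-p_1)}{M}C_n}\, e^{\,j\frac{2\pi(q_2-q_1)}{N}n}.
\end{equation*}
The key observation is that this expression depends on the block indices only through the difference $q_2 - q_1$, and on the inner indices only through $p_2 - p_1$. The latter already reproves Lemma \ref{lemma:1}; the former is the new ingredient.

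Second, using this difference-only dependence, I would immediately read off that $\bm X_{q_1,q_2} = \bm X_{q_1-1,q_2-1}$ whenever $q_1,q_2 \geq 1$, since both sides have the same $\Delta q = q_2 - q_1$. For the wrap-around row ($q_2 = 0$) I would invoke the $N$-periodicity of $e^{j 2\pi \Delta q \, n/N}$ in $\Delta q$, which gives $\bm X_{q_1,0} = \bm X_{q_1-1,N-1}$, since shifting $\Delta q$ by $\pm N$ leaves every entry unchanged. This is exactly \eqref{Eq: block circulant}, i.e., the block-circulant structure.

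Finally, combining Lemma \ref{lemma:1} (each block is circulant) with the block-circulant relation just established yields the lemma. The argument is essentially bookkeeping; the only subtle point is the wrap-around case, which is handled by the periodicity of the complex exponential in $\Delta q$ with period $N$, so no genuine obstacle is expected.
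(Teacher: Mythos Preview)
Your proposal is correct and matches the paper's approach: the paper also writes out the $(p_1,p_2)$-entry of $\bm X_{q_1,q_2}$ via \eqref{eq:entryXq1q2}, observes the dependence on $q_2-q_1$ only, and invokes the $N$-periodicity of $e^{j2\pi(q_2-q_1)n/N}$ to handle the wrap-around, establishing \eqref{Eq: block circulant2} (equivalent to \eqref{Eq: block circulant}) before combining with Lemma~\ref{lemma:1}. The only cosmetic difference is that the paper compares each block to the first block-row $\bm X_{0,\cdot}$ rather than to the preceding block-row, but this is the same argument.
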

\begin{proof}
This is a simple consequence of the previous discussion.
\end{proof}

For a matrix with such structure, its eigenvalues are given by the eigenvalues ($\lambda_m^{n}$, $m \in \mathbb{M}$) of the circulant blocks $\bm X_{0,n}$, $n \in \mathbb{N}$ \cite[Thm 5.8.1]{Davis1979}. In particular, for $l = qM + m$, $q \in \mathbb{N}$, we have the following%\Verify{Please complete the following equation.}
\begin{equation}
\label{Eq:eigenvalue expression of X}
    \lambda_l = \sum_{n=0}^{N-1}e^{j2\pi\frac{qn}{N}}\lambda_m^{n}.
\end{equation}

Given the eigenvalues of $\bm X$, we then obtain the singular values of $\bm \Psi$ using \eqref{eq:sqrt eigenvalue}. Finding the maximum of these singular values yields $\left\| \bm \Psi \right\|_s$, as stated in the following corollary.

\iffalse
\begin{proof}
For an arbitrary block $\bm{X}_{q_1,q_2}, 0 \leq q_1,q_2 < N$ in $\bm{X}$, it holds that
\begin{align}
\label{Eq: block circulant}
\bm{X}_{q_1,q_2} =\left\{
\begin{array}{lll}
\bm{X}_{0,\left(q_2-q_1\right)}      &   q_2 \geq q_1, \\
\bm{X}_{0,\left(q_2-q_1 + N\right)}  &   q_2 < q_1.
\end{array} \right.
\end{align}
Equation \eqref{Eq: block circulant} can be obtained by applying similar steps as in the proof of Lemma \ref{lemma:1}. Therefore $\bm{X}$ is a circulant block matrix. Considering that any block of $\bm{X}$ is a circulant matrix as stated in Lemma \ref{lemma:1}, Lemma \ref{lemma:block circulant} holds.
\end{proof}

\begin{lemma}
\label{lemma:4}
The matrix $\bm{X}$ are diagonalizable by the unitary matrix $\bm{F}_N \otimes \bm{F}_M$ as
\begin{equation}
\begin{split}
\bm{X} = \left(\bm{F}_N \otimes \bm{F}_M \right)^{H}\left(\sum_{n=0}^{N-1}{\bm{\Omega}^n \otimes \bm{\Lambda}_n }\right)\left(\bm{F}_N \otimes \bm{F}_M \right)
\end{split}
\end{equation}
where $\bm{F}_N$ is the $N$-th-order Fourier matrix,
\begin{equation}
    \bm{\Omega} = \text{diag}\left( 1, \omega, \omega^2, \cdots, \omega^{N-1}  \right), \omega = e^{j2\pi/N}
\end{equation}
and
\begin{equation}
    \bm{\Lambda}_n = \text{diag}\left( \lambda_0^{0,n}, \lambda_1^{0,n}, \lambda_2^{0,n}, \cdots, \lambda_{M-1}^{0,n}  \right).
\end{equation}
\end{lemma}
\begin{proof}
According to Lemma \ref{lemma:block circulant} above and Theorem 5.8.1. in \cite{book:1498475}, Lemma \ref{lemma:4} holds.
\end{proof}
\fi

\begin{corollary}
\label{cor:psi_s}
The spectral norm of $\bm{\Psi}$ is given by
\begin{equation}
\label{Eq:spectral norm epxpression}
    \|\bm{\Psi}\|_s = \sqrt{M}.
\end{equation}
\end{corollary}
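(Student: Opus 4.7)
The plan is to reduce $\|\bm{\Psi}\|_s$ to the largest eigenvalue of the Gram matrix $\bm{X}$ via \eqref{eq:sqrt eigenvalue}, and then exploit the closed-form eigenvalue expression \eqref{Eq:eigenvalue expression of X} that Lemma \ref{lemma:block circulant} provides. Concretely, one has $\|\bm{\Psi}\|_s = \max_{l \in \mathbb{L}} \sqrt{\lambda_l}$, so it suffices to show $\max_{l} \lambda_l = M$.

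To that end, I would substitute the Lemma \ref{lemma:eigenvalues} expression $\lambda_m^{n} = \frac{M}{N}\sum_{n'=0}^{N-1} \zeta_{n',m}\, e^{j2\pi n n'/N}$ into \eqref{Eq:eigenvalue expression of X} and interchange the two summations. This isolates the inner discrete-exponential sum $\sum_{n=0}^{N-1} e^{j2\pi(q+n')n/N}$, which by orthogonality equals $N$ when $q+n' \equiv 0 \pmod{N}$ and vanishes otherwise. Writing the index as $l = qM+m$, the double sum therefore collapses to the single term $\lambda_l = M\,\zeta_{(N-q)\bmod N,\,m}$.

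Finally, because $\zeta_{n',m} \in \{0,1\}$ and, for every fixed $n'$, there is exactly one $m \in \mathbb{M}$ (namely $m = C_{n'}$) with $\zeta_{n',m}=1$, the multiset $\{\lambda_l\}_{l \in \mathbb{L}}$ consists of exactly $N$ copies of $M$ together with $N(M-1)$ zeros. Hence $\max_l \lambda_l = M$ holds deterministically, i.e., independently of the random frequency codes $\{C_n\}$, and taking a square root yields $\|\bm{\Psi}\|_s = \sqrt{M}$.

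Since the entire argument is a direct computational consequence of Lemmas \ref{lemma:1}--\ref{lemma:block circulant} together with orthogonality of DFT exponentials, I do not anticipate a genuine obstacle; the only noteworthy observation is that the spectral norm turns out to be deterministic even though $\bm{\Psi}$ itself is random, which sharply contrasts with the probabilistic bounds \eqref{Eq:uI bound}--\eqref{Eq:uB bound} on $\mu_I$ and $\mu_B$.
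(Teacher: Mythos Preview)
Your proposal is correct and follows essentially the same route as the paper: substitute Lemma~\ref{lemma:eigenvalues} into \eqref{Eq:eigenvalue expression of X}, collapse the inner exponential sum by DFT orthogonality to obtain $\lambda_l = M\,\zeta_{(N-q)\bmod N,\,m} = M\,\delta\!\left(C_{(N-q)\bmod N}-m\right)$, and conclude that the eigenvalues lie in $\{0,M\}$ with the value $M$ actually attained. Your explicit count of multiplicities is a slight elaboration beyond what the paper writes, but the argument is otherwise identical.
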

\begin{proof}
See Appendix \ref{sec:proof of spectral norm}.
\end{proof}

Using probability bounds on $\mu_I$ and $\mu_B$ together with $\left\|\bm \Psi\right\|_s$, we are now ready to derive a unique recovery condition in the subsequent subsection.

\subsection{Unique recovery condition}
\label{subsec:bound}

Based on the condition \eqref{Eq: original sufficient condition} in Theorem \ref{theorem:origin}, we develop a requirement on $M$, $N$ (radar parameters) and $K$ (the block sparsity, i.e., the number of extended targets), under which the observation matrix $\bm \Psi$ meets the condition \eqref{Eq: original sufficient condition} with high probability. We state the main result in the following theorem.
\begin{theorem}
\label{theorem:unique recovery}
For any constant $\epsilon > 0$ and a sufficiently large $N$, the inequality \eqref{Eq: original sufficient condition} holds with a probability at least $1 - \epsilon$ when the block sparsity satisfies
\begin{equation}
\label{Eq:Block CS sparsity constrain}
\begin{split}
K \leq  \frac{N\left(\frac{1}{8} - \delta_1 - \delta_2\right)^2}{81{M\log MN \left(1 + 2\delta_2/3\right)}},
\end{split}
\end{equation}
where $\delta_1 := 24\sqrt{\frac{M-1}{N}}\log{MN}\left(2\sqrt{\log{MN}-\log\epsilon} + 1\right)$ and $\delta_2 := \frac{3}{2}\sqrt{\frac{M-1}{N}}\left(2\sqrt{\log{2M}-\log\epsilon} + 1\right)$ are small constants for large $N$.
\end{theorem}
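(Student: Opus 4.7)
The plan is to combine the ingredients already established: Corollary \ref{cor:psi_s} pins down $\|\bm \Psi\|_s = \sqrt{M}$, while \eqref{Eq:uI bound} and \eqref{Eq:uB bound} supply tail estimates for $\mu_I$ and $\mu_B$. I would substitute all three into the sufficient condition \eqref{Eq: original sufficient condition} of Theorem \ref{theorem:origin} and isolate a resulting inequality in $K$.

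First, I would introduce the high-probability event $\mathcal{E} := \{\mu_I \leq c_1\} \cap \{\mu_B \leq c_2\}$ and, via the union bound applied to \eqref{Eq:uI bound} and \eqref{Eq:uB bound}, demand that each of the two tail probabilities be at most $\epsilon/2$, so that $\mathds{P}(\mathcal{E}) \geq 1 - \epsilon$. Inverting the exponential bound $M e^{-(\sqrt{N}c_1 - \sqrt{M-1})^2/(4(M-1))} \leq \epsilon/2$ (and its counterpart with the prefactor $M\lfloor N/2 \rfloor \leq MN$) gives explicit smallest admissible values
\begin{equation*}
c_1 = \sqrt{\tfrac{M-1}{N}}\bigl(1 + 2\sqrt{\log 2M - \log \epsilon}\bigr), \quad c_2 = \sqrt{\tfrac{M-1}{N}}\bigl(1 + 2\sqrt{\log MN - \log \epsilon}\bigr),
\end{equation*}
up to harmless constants. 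I would then identify the two $K$-free summands of \eqref{Eq: original sufficient condition}, namely $3\mu_I$ and $48\mu_B \log(MN)$, with $\delta_2$ and $\delta_1$ respectively. This simultaneously pins down $\delta_1$ and $\delta_2$ to the forms declared in the statement and bounds them by quantities that tend to $0$ as $N \to \infty$.

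On the event $\mathcal{E}$, substituting $\|\bm \Psi\|_s^{2} = M$, $\mu_I \leq \delta_2/3$, and $48\mu_B \log(MN) \leq \delta_1$ reduces \eqref{Eq: original sufficient condition} to
\begin{equation*}
17\sqrt{\tfrac{K M \log(MN)\bigl(1 + \delta_2/3\bigr)}{N}} + \tfrac{2KM}{N} \leq \tfrac{1}{8} - \delta_1 - \delta_2,
\end{equation*}
whose right-hand side is strictly positive once $N$ is large enough. This is a quadratic-in-$\sqrt{K}$ inequality. Solving it (or, more transparently, dominating the linear-in-$K$ term by the square-root term in the relevant regime $K \ll N/M$) yields the explicit threshold \eqref{Eq:Block CS sparsity constrain}, where the factor $81$ emerges from a $17^2$ divided by the $1/(8) - \delta_1 - \delta_2$ normalization after rearrangement.

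The main obstacle I expect is the bookkeeping between the two types of terms, $\sqrt{KM/N}$ and $KM/N$: to obtain a bound of the clean form \eqref{Eq:Block CS sparsity constrain}, the linear-in-$K$ summand has to be absorbed into the square-root one, which is legitimate precisely in the regime where the stated bound holds but requires an honest check rather than a casual approximation. A secondary delicate point is verifying that the choices of $c_1, c_2$ obtained by inverting the tails agree (up to the integer constants $3$ and $48\log(MN)$ factored out of $\mu_I$ and $\mu_B$) with the $\delta_1, \delta_2$ written in the statement, and that the prefactor $M \lfloor N/2 \rfloor$ in \eqref{Eq:uB bound} can be absorbed into $\log(MN) - \log \epsilon$ without disturbing the final constants.
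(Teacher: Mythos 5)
Your plan follows essentially the same route as the paper: invert the tail bounds \eqref{Eq:uI bound} and \eqref{Eq:uB bound} at level $\epsilon/2$ each, take a union bound to get $\mathds{P}(\mu_I\le c_1,\ \mu_B\le c_2)\ge 1-\epsilon$ with exactly the $c_1,c_2$ you state, substitute $\|\bm{\Psi}\|_s=\sqrt{M}$ from \eqref{Eq:spectral norm epxpression} into \eqref{Eq: original sufficient condition}, and solve the resulting quadratic in $\sqrt{KM/N}$. However, as written the sketch does not deliver \eqref{Eq:Block CS sparsity constrain}, for two concrete reasons. First, your identification of the $K$-free terms with $\delta_1,\delta_2$ is off by a factor of two relative to the statement's definitions: since $\delta_2=\tfrac{3}{2}c_1$ and $\delta_1=24\,c_2\log MN$, the event $\mathcal{E}$ only gives $\mu_I\le c_1=\tfrac{2\delta_2}{3}$ and $48\mu_B\log MN\le 2\delta_1$, not $\mu_I\le \delta_2/3$ and $48\mu_B\log MN\le\delta_1$ as you claim. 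The paper handles this by first halving \eqref{Eq: original sufficient condition}, so the correct reduced inequality is $\tfrac{17}{2}\sqrt{KM\log MN\,(1+2\delta_2/3)/N}+KM/N\le\tfrac{1}{8}-\delta_1-\delta_2$; note the factor $1+2\delta_2/3$ (which appears in \eqref{Eq:Block CS sparsity constrain}), not $1+\delta_2/3$, and your mixed bookkeeping (RHS $\tfrac18-\delta_1-\delta_2$ but coefficients $17$ and $2KM/N$ unhalved) is internally inconsistent.

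Second, the passage from the quadratic to the clean closed form is precisely where the constant $81$ is decided, and your account of it is not correct. Writing $a=\tfrac{1}{8}-\delta_1-\delta_2$ and $b=\tfrac{17}{4}\sqrt{\log MN\,(1+2\delta_2/3)}$, the halved condition is equivalent to $\bigl(\sqrt{KM/N}+b\bigr)^2-b^2\le a$, i.e.\ $\sqrt{KM/N}\le\sqrt{a+b^2}-b$, and the paper lower-bounds the right-hand side by $\tfrac{17a}{36b}$ (legitimate because $a<\tfrac{1}{8}$ while $b\ge\tfrac{17}{4}$, and this surrogate is monotone in $\mu_I,\mu_B$, which justifies plugging in the worst-case $c_1,c_2$); squaring then gives $K\le Na^2/\bigl(81\,M\log MN\,(1+2\delta_2/3)\bigr)$ with $81=(36/4)^2$. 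It does not arise as ``$17^2$ divided by the $\tfrac18-\delta_1-\delta_2$ normalization.'' Your alternative of absorbing the linear-in-$K$ term into the square-root term in the regime $K\ll N/M$ can be made rigorous, but unless that absorption is quantified by an explicit bound of this kind you will not recover the specific constants of \eqref{Eq:Block CS sparsity constrain}; so the gap is not the architecture of the argument but the unfinished (and partly misstated) constant tracking in these two steps.
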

\begin{proof}
See Appendix \ref{sec:appendices 1}.
\end{proof}

Theorem \ref{theorem:unique recovery} reveals that the observation matrix of RSFR satisfies \eqref{Eq: original sufficient condition} with high probability (with respect to the random selection of carrier frequencies) if the number of the extended targets, i.e., the block number, is in the order of $K = O\left({\frac{N}{M\log MN}}\right)$.
In this case, according to Theorem \ref{theorem:origin}, we obtain average-case guarantees for range-velocity reconstruction in RSFR.
In terms of the number of scattering points, the scale becomes $KM = O\left({\frac{N}{\log MN}}\right)$ since each block contains $M$ elements.
Comparing this result with the previous bound that was built on canonical (i.e., non-block) sparse recovery \cite{Huang2018TSP}, i.e.,  $KM = O\left(\sqrt{\frac{N}{\log MN}}\right)$, makes us optimistic to use block sparse recovery in RSFR. In the ensuing section, both synthetic and field experiments are executed, results of which demonstrate that block sparse recovery leads to better performance on range-velocity reconstruction than the non-block counterparts.
%However, if using the SR for HRR range-velocity reconstruction, the number of target scatters is required on the order of $KM = O\left(\sqrt{\frac{N}{\log MN}}\right)$ as stated in \cite{Huang2018TSP}. Therefore, BSR can recover the target scene with more scatters compared with the conventional SR in theory.
\section{Experimental Results}
\label{sec:experiment}
In this section, both simulated and measured data are provided to test the performance of the block and non-block sparse recovery algorithms.
We consider noiseless and noisy scenarios. In noiseless cases, we use the the mixed $\ell_{2,1}$ norm minimization and the Block-OMP as examples of block sparse recovery algorithms, the $\ell_{1}$ norm minimization and OMP as the counterparts of non-block algorithms for comparison, respectively. In noisy cases, the $\ell_{1}$ and  mixed $\ell_{2,1}$ norm minimization become Lasso and Block-Lasso, respectively. When we deal with measured data, we apply matched filter additionally, which simply reconstructs the range-Doppler parameters as ${\hat{\bm x}} = \bm \Psi^H \bm y$.
The results demonstrate the effectiveness of block sparse recovery and imply its superiority over conventional sparse recovery. Specifically, the simulation results are presented in Subsection \ref{sec:exp:simu}, in which we focus on two aspects: 1) the statistical property of the observation matrix $\bm \Psi$ including $\mu_I, \mu_B$ and $\|\bm{\Psi}\|_s$, and 2) the reconstruction performance in both the noiseless and noisy scenarios.
%In the noiseless scenario, four algorithms are used for comparison, i.e., the $\ell_{1}$ norm minimization, , OMP and Block-OMP, while in the noisy scenario they become Lasso, Block-Lasso, OMP and Block-OMP.
Then, the measured-data results are provided in Subsection \ref{sec:exp:field data}, in which the reconstruction performance of both air and surface target scenarios is tested. %Five methods including Lasso, Block-Lasso, OMP, Block-OMP and matched filter are used to recover the target information.

\subsection{Simulation Results}
\label{sec:exp:simu}
In this subsection, three simulation experiments are conducted for different considerations. In the first experiment, we focus on the block coherence and the spectral norm of the observation matrix $\bm{\Psi}$, and study the impact of the relative bandwidth on them. In the second and third experiments, we discuss the range-velocity reconstruction performance in the noiseless and noisy scenarios, respectively.

In the first experiment, we study the Complementary Cumulative Distribution Functions (CCDFs) of $\mu_I$, $\mu_B$ and $\|\bm{\Psi}\|_s$. We set the pulse number and the frequency point number as $N = 32$ and $M = 4$, respectively. Different Relative Bandwidths (RB) are simulated, which is defined as $\text{RB} = \frac{M\Delta f}{f_c}$. The observation matrix $\bm{\Psi}$ is generated according to \eqref{Eq: observation vector}.
In the cases when RB$=0.01$ or $0.1$, the assumption $\xi_n = 1$ does not apply. And we denote by "RB$=0$" when we generate $\bm{\Psi}$ under the assumption $\xi_n = 1$.
The results are shown in Fig. \ref{fig:simu: muI}-\ref{fig:simu: spec norm}, in which the theoretical bounds of $\mu_I$ and $\mu_B$, presented in \eqref{Eq:uI bound} and \eqref{Eq:uB bound}, respectively, are also shown for comparison.
As expected, when the assumption $\xi_n = 1$ applies, the CCDFs of $\mu_I$ and $\mu_B$ are bounded by \eqref{Eq:uI bound} and \eqref{Eq:uB bound}, respectively; and $\|\bm{\Psi}\|_s = \sqrt{M}$.
When $\xi_n = 1$ does not hold, the CCDF of $\mu_I$ does not change, which can also be deduced from the fact that the definition of $\mu_I$ \eqref{eq:uI_snorm} is irrelevant with $\xi_n$.
However, in this situation, CCDFs of $\mu_B$ and $\|\bm{\Psi}\|_s$ change.
In the tested scenarios, $\mu_B$ and $\|\bm{\Psi}\|_s$ tend to take values slightly larger than those when we assume $\xi_n = 1$. Changes of $\mu_B$ and $\|\bm{\Psi}\|_s$ may affect reconstruction performance in RSFR applications and we leave the theoretical analysis for future investigation. However, as indicated in the following experiments, the block sparse recovery algorithms still enjoy satisfactory reconstruction performance though the assumption $\xi_n = 1$ does not apply.

\begin{figure}[htbp]
	\begin{minipage}[b]{1.0\linewidth}
		\centering
		\centerline{\includegraphics[width=7cm]{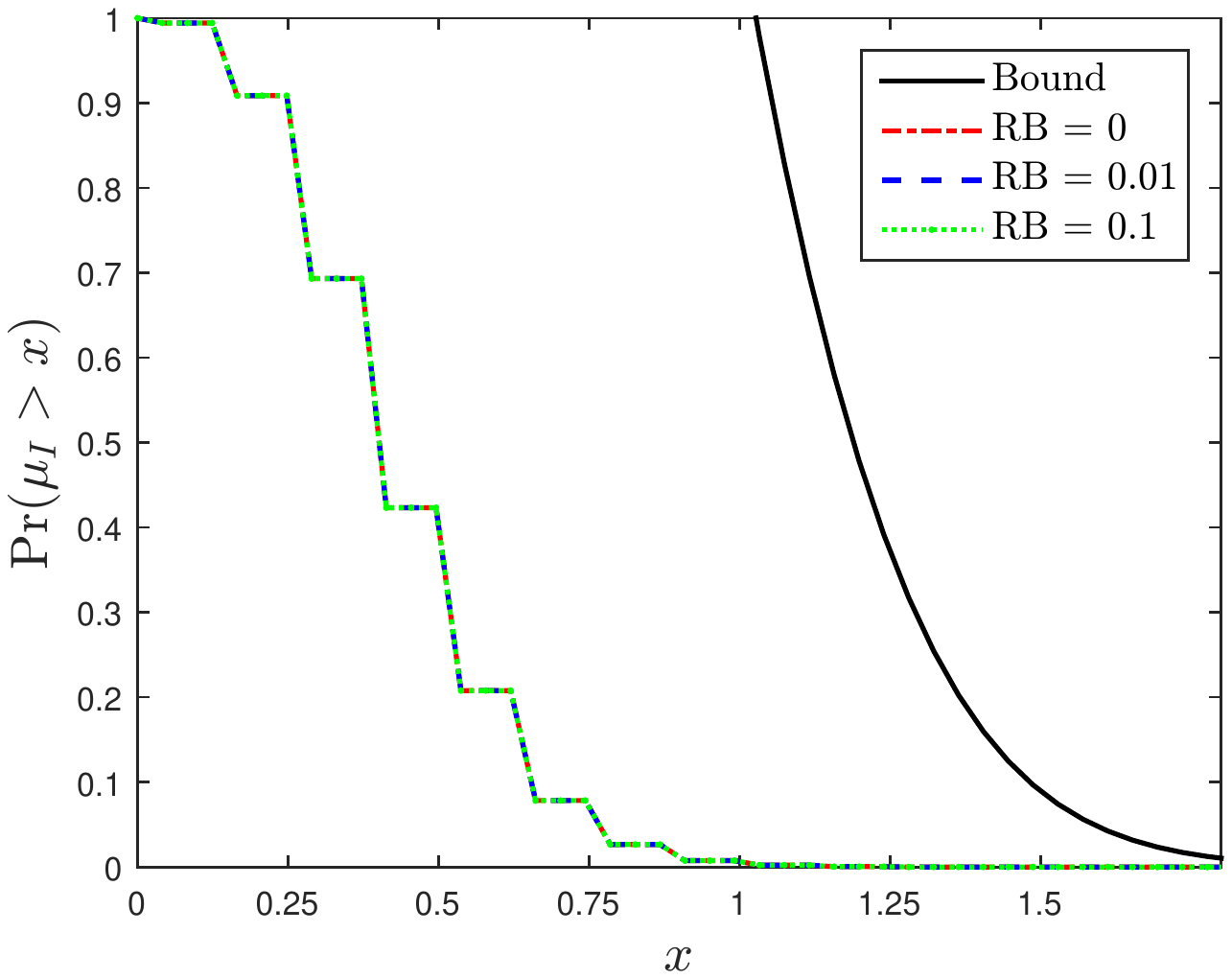}}
	\end{minipage}
	\caption{CCDFs of $\mu_I$ with $N=32$ and $M=4$.}
	\label{fig:simu: muI}
	\begin{minipage}[b]{1.0\linewidth}
		\centering
		\centerline{\includegraphics[width=7cm]{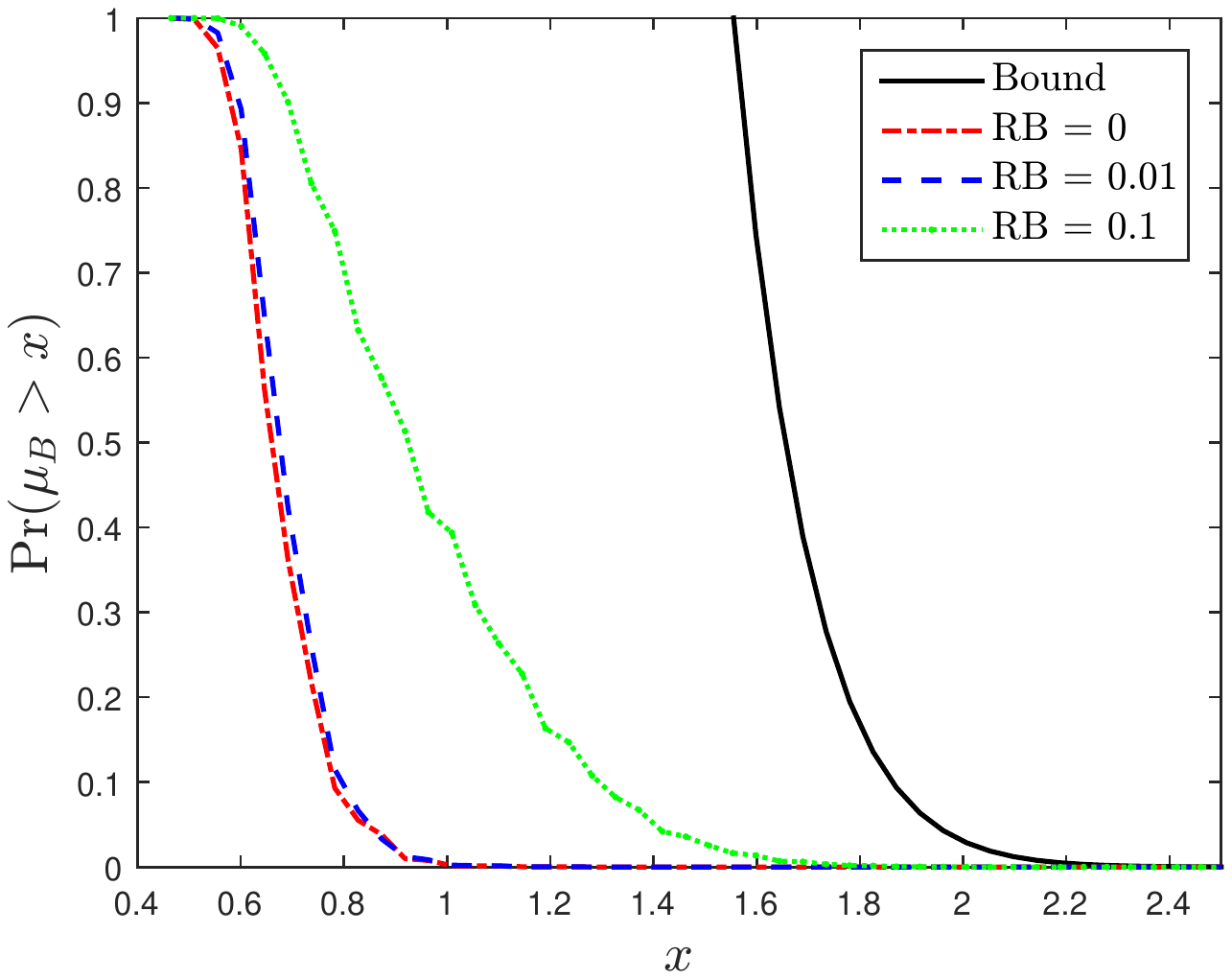}}
	\end{minipage}
	\caption{CCDFs of $\mu_B$ with $N=32$ and $M=4$.}
	\label{fig:simu: muB}
	\begin{minipage}[b]{1.0\linewidth}
		\centering
		\centerline{\includegraphics[width=7cm]{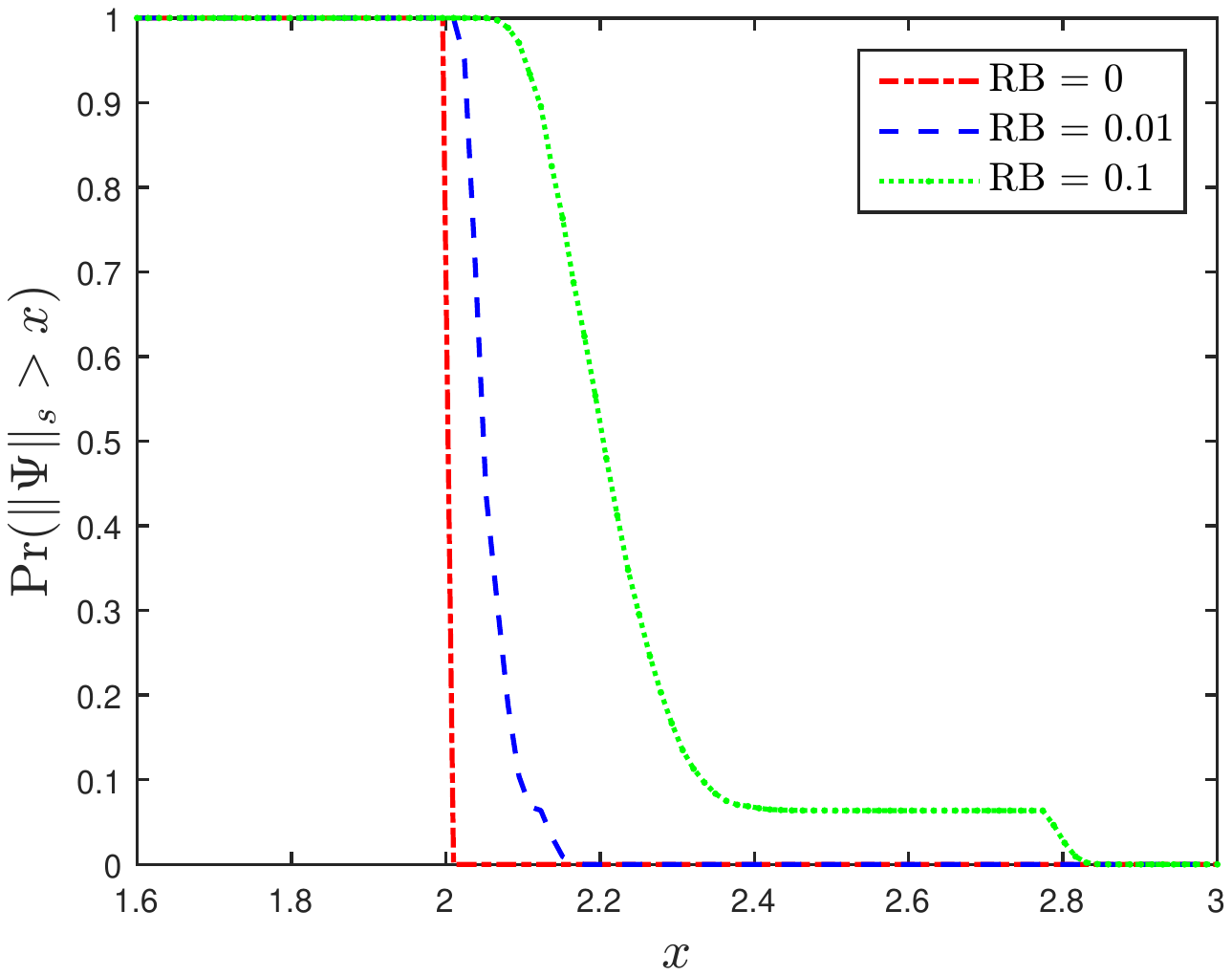}}
	\end{minipage}
	\caption{CCDFs of $\|\bm{\Psi}\|_s$ with $N=32$ and $M=4$.}
	\label{fig:simu: spec norm}
\end{figure}
\begin{figure}[htbp]
\begin{minipage}[b]{1.0\linewidth}
 \centering
 \centerline{\includegraphics[width=7.5cm]{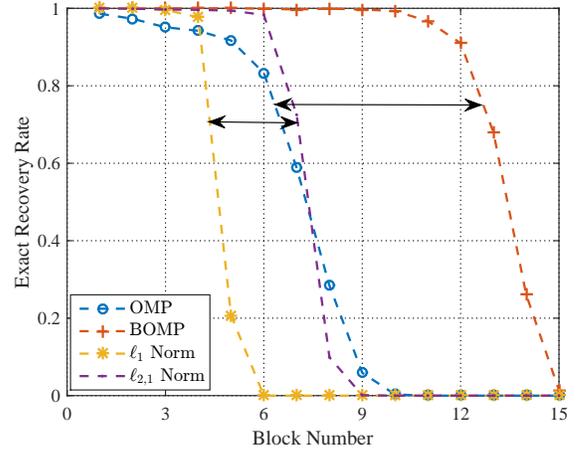}}
\end{minipage}
\caption{The exact recovery ratio comparison in the noiseless case.}
\label{fig:simu: Noiseless ERR}
\end{figure}
\begin{figure}[htbp]
\begin{minipage}[b]{1.0\linewidth}
 \centering
 \centerline{\includegraphics[width=8.6cm]{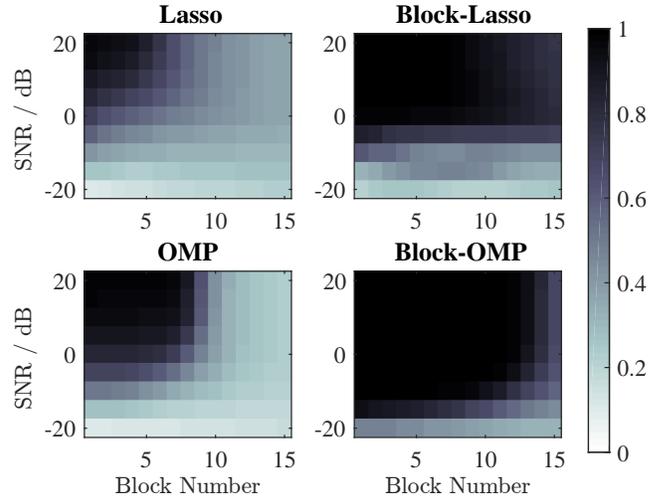}}
\end{minipage}
\caption{The hit rate of different algorithms in the noisy case.}
\label{fig:simu: phase transtion}
\end{figure}

In the next two experiments, we verify the robustness of block sparse recovery algorithms and their superiority in comparison with conventional sparse recovery algorithms.
The RSFR works with the parameters as follows: $f_c = 9$GHz, $T_r = 20\mu$s, $\Delta f = 30$MHz, $M = 8$ and $N =128$. Multiple extended targets are simulated. And each one consists of $P_k = 8$ scatterers distributed in cluster along range. The velocity frequencies of the targets $\{f_{v_k}\}$ are randomly uniformly selected from the grid points $\{q/N\}_{q \in \mathbb{N}}$ and each scatterer has a random scattering coefficient obeying a complex Gaussian distribution as $\gamma_{ki}\sim \mathcal{CN}\left(0,1\right)$. When we run OMP and Block-OMP, the sparsity and block sparsity are set as the true numbers of scatterers and targets, respectively. In the $\ell_1$ and mixed $\ell_{2,1}$ norm minimization, an entry $\hat{x}$ of $\hat{\bm x}$ is regarded as nonzero and its index is contained in the support set when the magnitude exceeds a rather low threshold (i.e., $|\hat{x}|>10^{-5}$). For Lasso and Block-Lasso used in the noisy scenario, the support sets are identified by seeking for elements in $\hat{\bm{x}}$ with $\sum_k P_k$ largest magnitudes, where $\sum_k P_k$ denotes the true scatterer number.

\begin{figure*}[htbp]
	\begin{minipage}[b]{1.0\linewidth}
		\centering
		\centerline{\includegraphics[width=18.0cm]{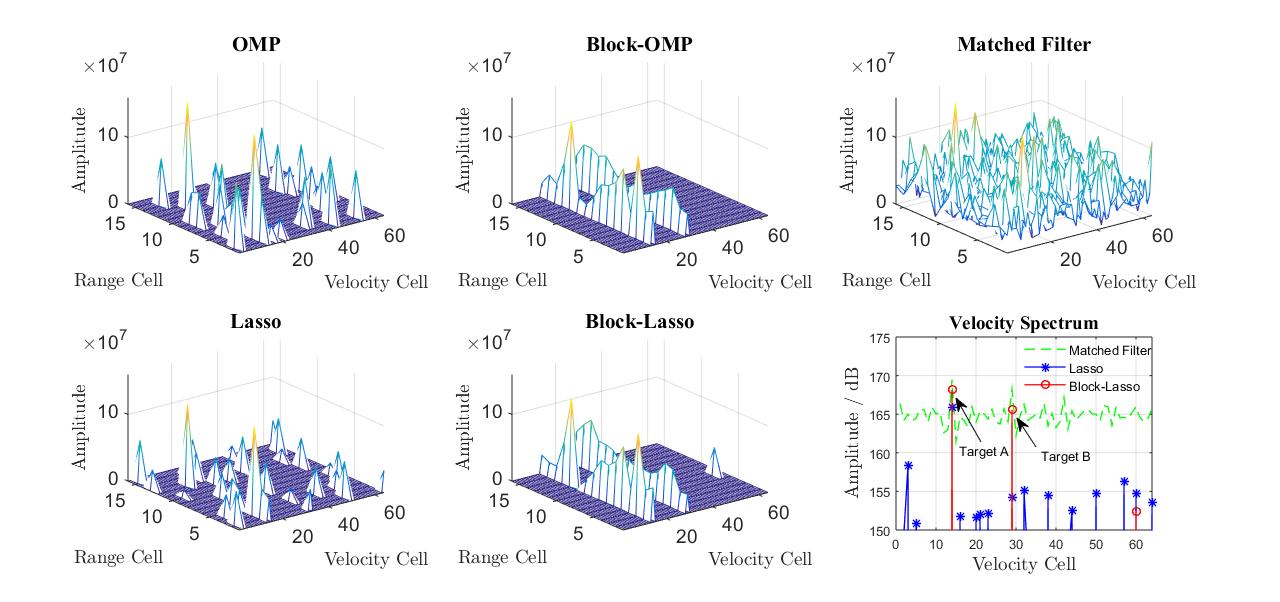}}
	\end{minipage}
	\caption{The reconstructed air targets using different methods.}
	\label{fig:real data: Plane}
\end{figure*}
The second experiment considers the noiseless case. We repeat the simulation independently for $1000$ times under different target numbers (i.e., block numbers). Exact recovery rate is introduced to evaluate the recovery performance. An exact recovery is proclaimed when the recovered support set exactly matches the ground truth and the exact recovery rate is defined as the ratio of exact recovery times to the overall simulation times. %\Verify{Does it exactly hold for convex relaxation algorithms?}\Answer{Sorry, I make a mistake! The simulation in fact is not carried out like this. The criteron is that the recovered support matches the ground truth.},
Results are shown in Fig. \ref{fig:simu: Noiseless ERR}. We find that block sparse recovery algorithms reconstruct larger numbers of blocks than the non-block counterparts.
The performance improvement is indicated by double-headed arrows, which reveals the superiority of block sparse recovery over the conventional sparse recovery.
%\Verify{I suggest you divide two arrows, which would make the figure clearer.} \Answer{You mean change the color? Or change the location of one arrow? - The location, no overlap between two arrows.}
%\Verify{I was wondering whether it is easy for the reviewers to follow these abbreviations. I check Yonina and Bajwa's papers, they don't use such abbreviations.}\Answer{Thanks to point out it! I agree with you. So I will transform all the abbreviations ``SR'' and ``BSR''? - Yes, thanks. } .

The third experiment is designed for noisy scenarios. Due to the existence of noise, exact recovery is hard to obtain. Instead, we introduce hit rate for performance evaluation, which is defined as the ratio of the number of correctly recovered nonzero entries to the total number of nonzero entries. Different block numbers and Signal-to-Noise Ratio (SNR) are simulated. The SNR is defined as $\text{SNR} = 10\log_{10}{\frac{1}{\sigma^2}}$. Hit rate results are shown in Fig. \ref{fig:simu: phase transtion}, in which each square is obtained by averaging $1000$ Monte Carlo trials. Larger area of the dark color part represents better performance, which also reveals the superiority of block sparse recovery over the conventional sparse recovery.

\subsection{Measured Data Results}
\label{sec:exp:field data}
In this subsection, measured data is used to demonstrate the effectiveness of the block sparse recovery methods. Radar data is obtained in two different target scenarios. One is an air-target scenario, where there are two targets but no clutter, while the other is a surface-target scenario with only one target and serious clutter. For OMP and Block-OMP, the sparsity and block sparsity are set to $2M$ and $2$, respectively.
In Lasso and Block-Lasso, we identify $2M$ elements of $\hat{\bm x}$ with largest magnitudes and apply least squares to these elements to refine the estimates of scattering coefficients.
Finally recovered $\hat{\bm{x}}$ is reshaped into matrix form $\bm \Gamma$ according to the definition \eqref{eq:x}, and the magnitudes of $\bm \Gamma$ will be shown in Fig. \ref{fig:real data: Plane} and Fig. \ref{fig:real data: boat} corresponding to the range-velocity plane. Note that each range cell and velocity cell correspond to a grid point of range frequency and velocity frequency, respectively.
In addition, the velocity spectrum is introduced to evaluate the velocity estimation performance of extended targets. Specifically, the $q$-th entry of the velocity spectrum $\bm{\nu}_s$ is defined as the $\ell_2$ norm of the estimated HRR profile in the $q$-th velocity cell, i.e.,
%we calculate the $\ell_2$ norm of the HRR profile for every velocity cell in the recovery result, i.e., the velocity spectrum $\bm{\nu}_s$ is obtained with the $q$-th entry given by
\begin{equation}
    \left[\bm{\nu}_s\right]_q = \| \hat{\bm{x}}_q\|_2, \  q \in \mathbb{N},
\end{equation}
where $\hat{\bm{x}}_q$ is the $q$-th block of $\hat{\bm{x}}$.

\subsubsection{Air Target Recovery} Here, we provide experimental results with real measured RSFR data from civil aircraft in air. During the observing time, the radar transmits $N = 64$ pulses in a CPI, whose carrier frequencies cover $M = 16$ frequency points with the frequency step size $\Delta f = 50$MHz. Two civil aircraft are flying away from the radar with relative velocities $35$m$/$s (Target A) and $45$m$/$s (Target B), respectively.
Echoes from these two aircraft are measured individually with identical radar waveform.
We add the echoes of the two aircraft to generate a two-target scenario. For the measured two-target-scenario data, the SNR is high, and extra complex Gaussian noise is added to lower the SNR to approximately $10\text{dB}$. %\Verify{Not clear. Specify that magnitudes of $\hat{\bm x}$ are shown, and explain matched filter and velocity spectrum. I also suggest you explain a little bit on range cell and velocity cell.}
Magnitudes of reconstructed $\bm \Gamma$ are shown in Fig. \ref{fig:real data: Plane}. From Fig. \ref{fig:real data: Plane}, we observe that the matched filter result has a high sidelobe pedestal: only a few strong scatterers of Target A are distinct but many other scatterers are submerged by the sidelobe pedestal. Non-block sparse recovery algorithms, i.e., OMP and Lasso, extract more strong scatterers of Target A than matched filter, but lead to many spurious peaks. In the resuls of Block-OMP and Block-Lasso, both the HRR profiles of Target A and Target B are extracted and only one spurious peak with weak magnitude appears in the Block-Lasso result. These results demonstrate the validness of the block sparse recovery for extended targets and the advantage over the non-block counterparts.

The results of velocity spectrum are also shown in Fig. \ref{fig:real data: Plane}. To make the results clear, we only show the results of matched filter, Lasso and Block-Lasso because the results of the OMP
\begin{figure*}[htbp]
	\begin{minipage}[b]{1.0\linewidth}
		\centering
		\centerline{\includegraphics[width=18.0cm]{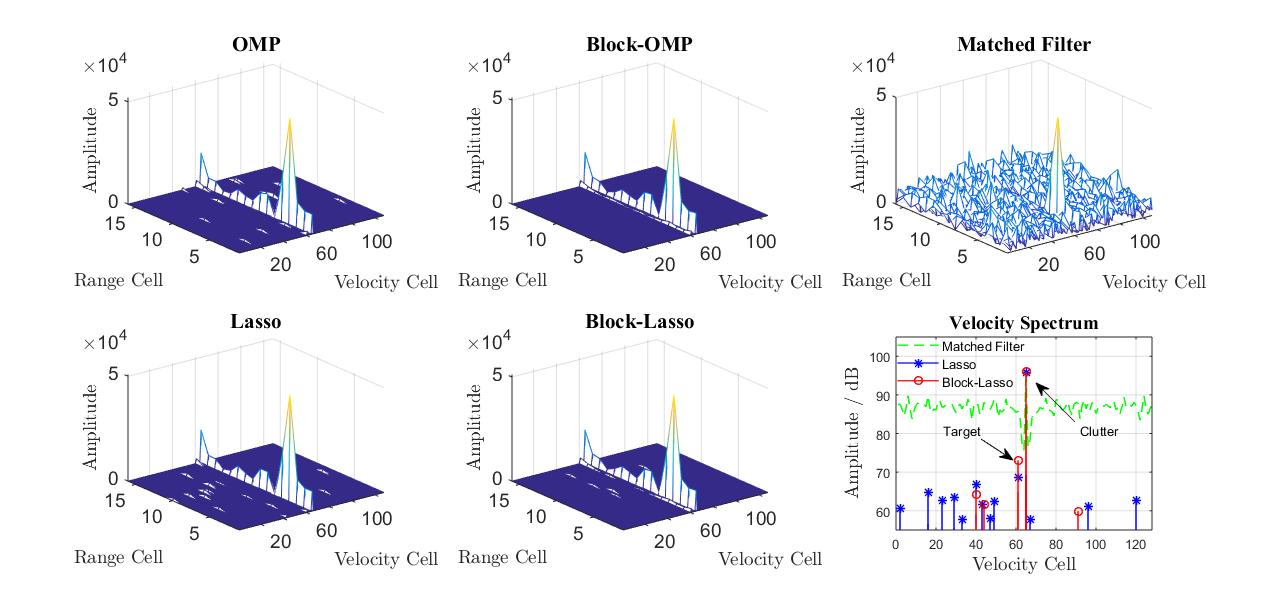}}
	\end{minipage}
	\caption{The reconstructed surface target and clutter using different methods.}
	\label{fig:real data: boat}
\end{figure*}
\begin{figure}[htbp]
	\begin{minipage}[b]{1.0\linewidth}
		\centering
		\centerline{\includegraphics[width=7.0cm]{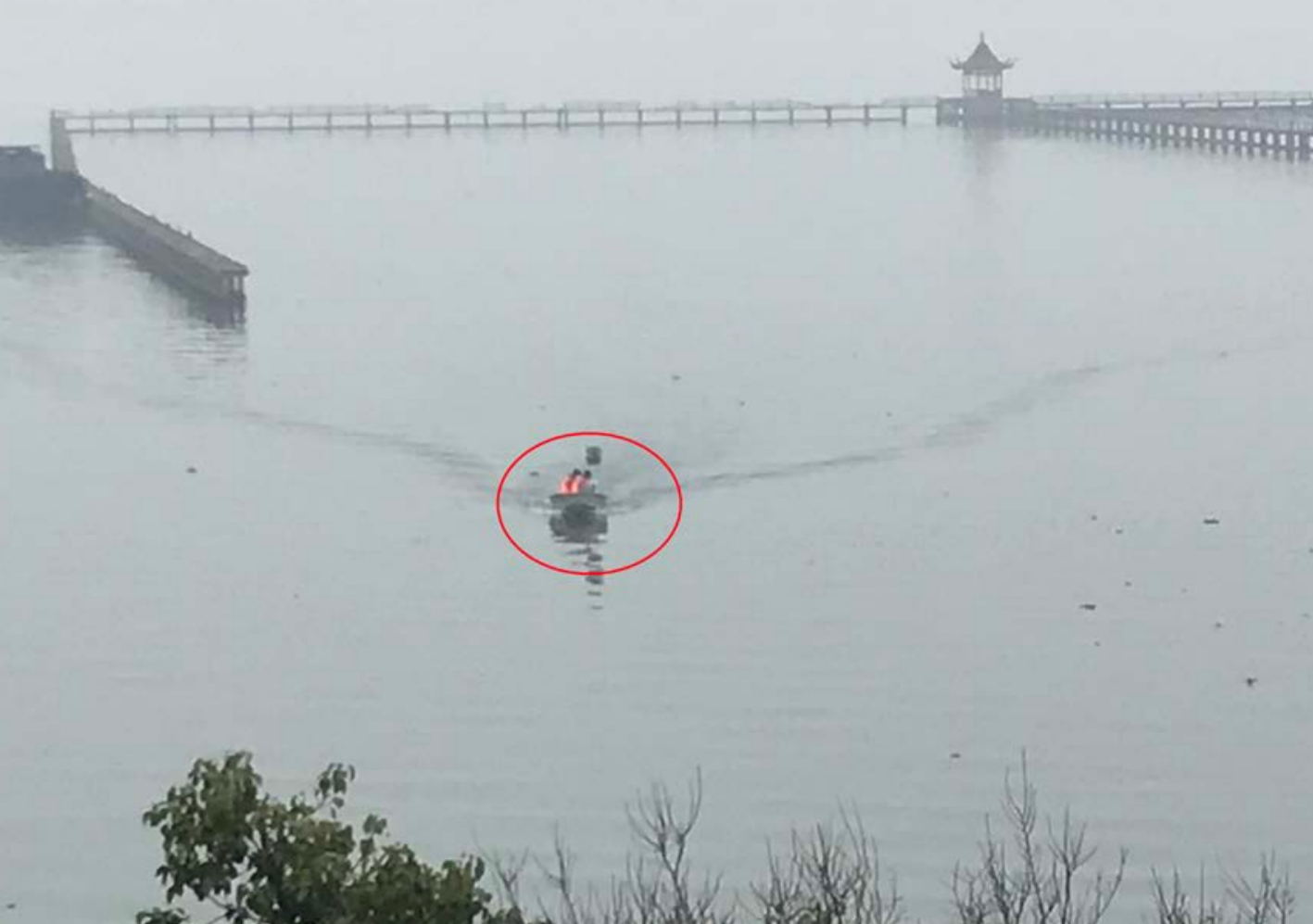}}
	\end{minipage}
	\caption{The photograph of the surface target scene.}
	\label{fig:real data: surface scenario}
\end{figure}
and Block-OMP are similar with those of Lasso and Block-Lasso, respectively. As one can find, all the three spectrums have two peaks corresponding to the true velocities of the targets. %\Verify{I suggest clearly show the true velocities in the figure}\Answer{I really agree with you! See whether is OK?}.
However, in matched filter and Lasso, the peaks are not distinct and many spurious peaks have magnitudes close to the true peaks.
In contrast, Block-Lasso results in sharp velocity spectrum with only one spurious peak approximately $13$dB weaker than those of the targets.
%\Verify{To demonstrate the advantage of block sparse recovery, the previous results are enough actually. You may consider whether it is necessary to show the velocity spectrum. Actually, the figure is not clear enough and it could confuse the readers, which contributes negatively I am afraid.}
%\Verify{If you decide to preserve the velocity spectrum, it could be better to show only three curves: MF, OMP and block OMP. And using plot() other than stems() may lead to better presentation.}\Answer{See whether is OK}

\subsubsection{Surface Target and Clutter Recovery}
We also use a RSFR to measure a boat moving in a lake. In this field experiment, the radar is amounted about $10$m above the lake surface and the boat is moving with a relative velocity about $2\text{m}/\text{s}$. The target scene is presented in Fig. \ref{fig:real data: surface scenario}, which is a photograph taken from the radar site. The RSFR is configured with parameters: $N = 128$, $M = 16$ and $\Delta f = 32$MHz. Due to the backscatter from the lake surface and other static objects, there are serious clutter in the received echoes and the clutter-to-signal ratio is approximately $25$dB.
%We also use a RSFR to measure a boat moving with a relative velocity about  $2\text{m}/\text{s}$ in a lake. The radar is amounted about $10$m above the lake surface and from the radar site we took a photograph of the target scene as presented in Fig. \ref{fig:real data: surface scenario}. In this field experiment, RSFR is configured with parameters: $N = 128$, $M = 16$ and $\Delta f = 32$MHz. Due to the backscatter from the lake surface and other static objects, there are serious clutter in the received echoes and the clutter-to-signal ratio is approximately $25$dB.

We demonstrate the reconstructed $\bm \Gamma$ in Fig. \ref{fig:real data: boat}. Due to the dominant intensities of the clutter, the HHR profiles of the clutter are recovered identically by all the five methods tested. In the result of matched filter, however, the target is completely submerged in the sidelobe pedestal of the clutter. Though the non-block OMP and Lasso indicate the true velocity of the boat, many spurious peaks appear in their reconstructed $\bm \Gamma$. Some of these spurious peaks have even higher amplitudes than the true target, which may lead to false alarm. While in the results of Block-OMP and Block-Lasso, both HRR profiles of the target and clutter are reconstructed and there are only a few spurious peaks appearing in the Block-Lasso result with inferior amplitudes in comparison with those of the dominant scatterers in the target. These measured data results demonstrate the effectiveness of the block sparse recovery algorithms for range-Doppler reconstruction of extended targets.
Since similar conclusions can be drawn from the velocity spectrum results, we omit the detailed discussions.
%\Verify{Please verify.}\Answer{It is not so accurate as the amplitude of the false point is 996 and that of the dominant point of the target is 2292. say ``with smaller amplitude compared with that of the main scatterer of the recovered targert?''}
\section{Conclusion}
\label{sec:conclusion}

In this paper, we consider the range-velocity reconstruction of extended targets in RSFRs. By exploiting the natural block sparsity of the extended targets, we introduce the block sparse recovery. We analyze the block coherence and the spectral norm of the observation matrix, and then establish a bound on the radar parameters, i.e., $KM = O\left({\frac{N}{\log MN}}\right)$, under which the range-velocity reconstruction is guaranteed. The obtained theoretical bound relaxes the previous constrain $KM = O\left(\sqrt{\frac{N}{\log MN}}\right)$, which is based on non-block sparse recovery. Both simulated and measured data results demonstrate the effectiveness of the block sparse recovery algorithms used for RSFRs and their superiority over the non-block counterparts.

\appendices
%%--------------------------------------------
% App sec: proof of Lemma for circulant matrix
%%--------------------------------------------
\section{Proof of Lemma \ref{lemma:1}}
\label{app:lemma circulant}

According to the definition of a circulant matrix \cite[Chapter 3]{Davis1979}, it is equivalent to prove that for $p_1,p_2 \in \mathbb{M}$, the $(p_1,p_2)$-th element satisfies
\begin{equation}
\label{eq:circulantXq1q2}
\left[\bm{X}_{q_1,q_2}\right]_{p_1,p_2}  = \left\{
\begin{array}{ll}
\left[\bm{X}_{q_1,q_2}\right]_{0,p_2-p_1}   & p_2 \geq p_1, \\
\left[\bm{X}_{q_1,q_2}\right]_{0,p_2-p_1 + M}&  p_2 < p_1.
\end{array} \right.
\end{equation}

Substituting the definitions $\bm{X}_{q_1,q_2} = \bm{\Psi}_{q_1}^H\bm{\Psi}_{q_2}$ and \eqref{Eq: simplified observation vector} %\Verify{Note that this requires the assumption $\epsilon_n = 0$.} \Answer{This assumption has been pointed out in the signal model part. To make it clear, I have add the assumption in the beginning of Section \ref{sec:condition}. Thanks!}
into the left hand side of \eqref{eq:circulantXq1q2}, we have
\begin{equation}
\label{eq:entryXq1q2}
\begin{split}
\left[\bm{X}_{q_1,q_2}\right]_{p_1,p_2}  &= \bm{\psi}_{p_1,q_1}^H\bm{\psi}_{p_2,q_2} \\
&= \frac{1}{N}\sum_{n=0}^{N-1}e^{j2\pi\frac{p_2-p_1}{M}C_n + j2\pi\frac{q_2-q_1}{N}n}.
 \end{split}
\end{equation}
Noticing the phase term
\begin{equation}
    \begin{split}
        e^{j2\pi\frac{p_2-p_1}{M}C_n} &= e^{j2\pi\frac{(p_2-p_1)-0}{M}C_n} \\
        &= e^{j2\pi\frac{(p_2-p_1+M)-0}{M}C_n},
    \end{split}
\end{equation}
we have
$\left[\bm{X}_{q_1,q_2}\right]_{p_1,p_2} = \left[\bm{X}_{q_1,q_2}\right]_{0,p_2-p_1}$ and $\left[\bm{X}_{q_1,q_2}\right]_{p_1,p_2} = \left[\bm{X}_{q_1,q_2}\right]_{0,p_2-p_1+M}$, as long as $p_2-p_1$ and $p_2 - p_1 +M$ belong to $\mathbb{M}$, respectively, which proves \eqref{eq:circulantXq1q2}.
Therefore $\bm{X}_{q_1,q_2}$ is a circulant matrix.

%%-------------------------------------------
% App sec: proof of Lemma eigenvalues
%%-------------------------------------------

\section{Proof of Lemma \ref{lemma:eigenvalues}}
\label{app:lemma eigenvalues}

Remember that $\bm{X}_{q_1,q_2}$ is a circulant matrix according to Lemma \ref{lemma:1}, whose eigenvalues are discrete Fourier transformation of its first row \cite{Chan1996}. For brevity, we let $\chi_p := \left[\bm{X}_{q_1,q_2}\right]_{0,p} \in \mathbb{C}$ be the $p$-th element of the first row in matrix $\bm{X}_{q_1,q_2}$.
From \eqref{eq:entryXq1q2}, $\chi_p$ is given by
\begin{equation}
\begin{split}
\chi_p  &= \frac{1}{N}\sum_{n=0}^{N-1}e^{j2\pi\frac{p}{M}C_n + j2\pi\frac{q_2-q_1}{N}n}.
\end{split}
\end{equation}
By applying discrete Fourier transformation to $[\chi_0, \chi_1,\dots,\chi_{M-1}]^T$, we obtain the eigenvalue as
\begin{equation}
\label{Eq:lambda mq1q2}
\begin{split}
\lambda_m^{q_1,q_2}  & = \sum_{p=0}^{M-1}{\chi_p e^{-j2\pi\frac{m}{M}p}}\\
 & = \frac{1}{N}\sum_{p=0}^{M-1}\sum_{n=0}^{N-1}{e^{j2\pi\frac{q_2 - q_1}{N}n + j2\pi\frac{C_n - m}{M}p}}\\
 & = \frac{1}{N}\sum_{n=0}^{N-1}\left(e^{j2\pi\frac{q_2 - q_1}{N}n}\sum_{p=0}^{M-1}{e^{j2\pi\frac{C_n - m}{M}p}}\right) \\
 & =  \frac{1}{N}\sum_{n=0}^{N-1} e^{j2\pi\frac{q_2 - q_1}{N}n} \cdot M\delta \left( C_n - m\right) \\
 & =  \frac{M}{N}\sum_{n=0}^{N-1}e^{j2\pi\frac{q_2 - q_1}{N}n}\cdot \zeta_{n,m} ,
 \end{split}
\end{equation}
where $\zeta_{n,m} := \delta \left(C_n - m\right) \in \{0,1\}$, completing the proof.

\section{Proof of Theorem \ref{theorem:tailsingular}}
\label{app:thm tail singular}
We  prove the theorem in the cases $\Delta q \neq 0$ and $\Delta q = 0$ individually.

In the case of $\Delta q \neq 0$, from \eqref{Eq:Expression of lambda_m_delta q} and \eqref{eq:singular eigenvalue}, we have
\begin{equation}
    \bar{\sigma}_m^{\Delta q} = \left| \frac{M}{N}\sum_{n=0}^{N-1} e^{j2\pi\frac{\Delta q}{N}n}\cdot \zeta_{n,m} \right|.
\end{equation}
Let $J_n = \frac{M\zeta_{n,m} - 1}{N}e^{j2\pi\frac{\Delta q}{N}n}$, we can verify that
\begin{equation*}
    \begin{split}
        \left| \sum_n^{N-1} J_n \right| =
        \left| \frac{M}{N}\sum_{n=0}^{N-1} e^{j2\pi\frac{\Delta q}{N}n} \zeta_{n,m} - \frac{1}{N}\sum_{n=0}^{N-1} e^{j2\pi\frac{\Delta q}{N}n}\right|
        = \bar{\sigma}_m^{\Delta q},
    \end{split}
\end{equation*}
by invoking the fact that $\sum_{n=0}^{N-1} e^{j2\pi\frac{\Delta q}{N}n} = 0$ for $\Delta q \ne 0$. We can also find that $J_n$ is independent from each other, which comes from the independence of $\zeta_{n,m}$ with respect to $n$. According to the distribution of $\zeta_{n,m}$ as mentioned in \eqref{eq:distribution},  we have  $\mathds{E}\left[{J}_n\right] = {0}$.
%and define the corresponding vector-valued variable $\bm{J}_n = \left[\text{Re}\left( J_n \right) \ \text{Im}\left( J_n \right)\right]^T$. Considering the independence among $\zeta_{n,m}, n=0,1,2,...,N-1$ and the fact $\mathds{E}\left[\bm{J}_n\right] = \bm{0}$, we can find that $\bm{J}_0, \bm{J}_1, \bm{J}_2,...,\bm{J}_{N-1}$ are independent zero-mean random variables. Let $Y_N = \|\sum_{n=0}^{N-1}\bm{J}_n \|_2$. It can be verified that $Y_N = \bar{\sigma}_m^{\Delta q}$.
Then (\ref{Eq: bound of bar sigma}) is a direct consequence of the Bernstein inequality \cite[Thm. 12]{5714248} with $V$ and $\epsilon$ given by
\begin{equation}
\begin{split}
V & = \sum_{n=0}^{N-1} \mathds{E} \left[\left|{J}_n\right|^2\right]\\
 & = \frac{M^2}{N^2}\sum_{n=0}^{N-1}\mathds{E}\left[\left(\zeta_{n,m} - 1/M\right)^2\right] \\
 &= \frac{M-1}{N},
\end{split}
\end{equation}
and
\begin{equation}
\begin{split}
\epsilon \leq V/\max_{n\in \mathbb{N}} {|J_n|} = 1,
\end{split}
\end{equation}
respectively.

When $\Delta q = 0$, let $J_n = \frac{M}{N}\zeta_{n,m} - \frac{1}{N}$ and follow the similar steps as above, leading to the same result (\ref{Eq: bound of bar sigma}) for $\bar{\sigma}_m^{0}$.

%%--------------------------------------------
% App sec: proof of {Eq: block circulant}
%%--------------------------------------------
\section{Proof of \eqref{Eq: block circulant}}
\label{app:block circulant}

Since \eqref{Eq: block circulant} is equivalent to \eqref{Eq: block circulant2}, we prove the latter following the same technique in Appendix \ref{app:lemma circulant}.

For $q_1,q_2 \in \mathbb{N}$, $p_1,p_2 \in \mathbb{M}$, combining the definition of $\left[\bm X_{q_1,q_2}\right]_{p_1,p_2}$ in \eqref{eq:entryXq1q2} and the fact that
\begin{equation}
    \begin{split}
        e^{j2\pi\frac{q_2-q_1}{N} n} &= e^{j2\pi\frac{(q_2-q_1)-0}{N} n} \\
        &= e^{j2\pi\frac{(q_2-q_1+N)-0}{N} n},
    \end{split}
\end{equation}
yields that
$\left[\bm{X}_{q_1,q_2}\right]_{p_1,p_2} = \left[\bm{X}_{0,q_2-q_1}\right]_{p_1,p_2}$ when $q_2-q_1 \in \mathbb{N}$, and $\left[\bm{X}_{q_1,q_2}\right]_{p_1,p_2} = \left[\bm{X}_{0,q_2-q_1+N}\right]_{p_1,p_2}$ when $q_2 - q_1 + N \in \mathbb{N}$. This completes the proof.

% -- The following is not used -------------
\iffalse
For an arbitrary block $\bm{X}_{q_1,q_2}, q_1,q_2 \in \mathcal{N}$ in $\bm{X}$, the element in the $p_1$-th row and $p_2$-th column, $p_1,p_2 \in \mathcal{M}$, is
\begin{equation}
\begin{split}
\left[\bm{X}_{q_1,q_2}\right]_{p_1,p_2}  &= \bm{\psi}_{p_1,q_1}^H\bm{\psi}_{p_2,q_2} \\
&= \frac{1}{N}\sum_{n=0}^{N-1}e^{j2\pi\frac{p_2-p_1}{M}C_n + j2\pi\frac{q_2-q_1}{N}n}.
 \end{split}
\end{equation}
When $q_2\geq q_1 $, it holds that
\begin{equation}
\label{Eq:Block Circulant Proof q2gq1}
\begin{split}
\left[\bm{X}_{q_1,q_2}\right]_{p_1,p_2} &= \frac{1}{N}\sum_{n=0}^{N-1}e^{j2\pi\frac{(p_2-p_1)}{M}C_n + j2\pi\frac{\left(q_2-q_1\right) - 0}{N}n}\\
&= \left[\bm{X}_{0,q_2 - q_1}\right]_{p_1,p_2}.
 \end{split}
\end{equation}
When $q_2 < q_1 $, it holds that
\begin{equation}
\label{Eq:Block Circulant Proof q2lq1}
\begin{split}
\left[\bm{X}_{q_1,q_2}\right]_{p_1,p_2} &= \frac{1}{N}\sum_{n=0}^{N-1}e^{j2\pi\frac{ p_2-p_1}{M}C_n + j2\pi\frac{\left(q_2-q_1 + N\right) - 0}{N}n}\\
&= \left[\bm{X}_{0,q_2-q_1 + N}\right]_{p_1,p_2}.
 \end{split}
\end{equation}
\eqref{Eq:Block Circulant Proof q2gq1} and \eqref{Eq:Block Circulant Proof q2lq1} hold  for arbitrary $p_1,p_2 \in \mathcal{M}$, equivalently, \eqref{Eq: block circulant} holds.
\fi
% ---------------- until here. -------------

%%--------------------------------------------------------
% App sec: proof of spectral norm
%%-------------------------------------------------------------
\section{Proof of Corollary \ref{cor:psi_s}}
\label{sec:proof of spectral norm}
To calculate the singular values of $\bm \Psi$, according to \eqref{eq:sqrt eigenvalue}, we start by continuing the derivation of $\lambda_l$, $l = qM + m$, in \eqref{Eq:eigenvalue expression of X} as
%Since the singular values of a matrix correspond to the eigenvalues of its Gram matrix \cite{zhang2017matrix}, we begin to calculate the eigenvalues of the Gram matrix $\bm{X} = \bm \Psi^H \bm \Psi$, i.e., $\lambda_l, 0 \leq l < MN$, by invoking \eqref{Eq:eigenvalue expression of X}. For $l = qM + m$, substituting \eqref{Eq:Expression of lambda_m_q_1_q_2} into \eqref{Eq:eigenvalue expression of X}, we can obtain that
\iffalse
\begin{equation}
\label{Eq:X eigenvalue}
\begin{split}
\lambda_l & = \sum_{n_1=0}^{N-1}e^{j2\pi\frac{qn_1}{N}}\lambda_m^{0,n_1}\\
& = \frac{M}{N}\sum_{n_1=0}^{N-1}\sum_{n_2=0}^{N-1}\zeta_{n_2,m}e^{j2\pi\frac{q+n_2}{N}n_1}\\
%& = M\sum_{n_2=0}^{N-1}\left[\delta \left(q + n_2\right) + \delta \left(q + n_2 - N\right) \right]\delta \left(p - C_{n_2}\right)\\
& = \left\{
\begin{array}{ll}
M\sum_{n_2=0}^{N-1}\delta \left(n_2\right)\delta \left(m - C_{n_2}\right)   & q = 0, \\
M\sum_{n_2=0}^{N-1} \delta \left(q + n_2 - N\right)\delta \left(m - C_{n_2}\right)&  q \neq 0,
\end{array} \right.\\
& = \left\{
\begin{array}{ll}
M\delta\left(m - C_0\right)   & q = 0, \\
M\delta\left(m - C_{N - q}\right)&  q \neq 0.
\end{array} \right.
\end{split}
\end{equation}
\fi
\begin{equation}
\label{eq:lambdal0}
\lambda_l = \sum_{n=0}^{N-1}e^{j2\pi\frac{qn}{N}}\lambda_m^{n}
 = \frac{M}{N}\sum_{n_1=0}^{N-1}\sum_{n_2=0}^{N-1}\zeta_{n_2,m}e^{j2\pi\frac{q+n_2}{N}n_1}.
\end{equation}
With the substitute of $\zeta_{n_2,m} = \delta\left(C_{n_2} - m\right)$, \eqref{eq:lambdal0} becomes
\iffalse
\begin{equation}
\label{Eq: lambda_l case q0}
\begin{split}
\lambda_l  &= M\sum_{n_2=0}^{N-1}\delta \left(m - C_{n_2}\right)\left(\frac{1}{N}\sum_{n_1=0}^{N-1}e^{j2\pi\frac{n_2}{N}n_1}\right)\\
&= M\sum_{n_2=0}^{N-1}\delta \left(m - C_{n_2}\right) \delta \left(n_2\right)\\
&= M\delta\left(m - C_0\right).
 \end{split}
\end{equation}
In the case of $q \neq 0$, we have
\fi
\begin{equation}
\label{Eq: lambda_l}
\begin{split}
\lambda_l & = M\sum_{n_2=0}^{N-1}\delta \left(C_{n_2} - m\right)\left(\frac{1}{N}\sum_{n_1=0}^{N-1}e^{j2\pi\frac{n_2 + q}{N}n_1}\right)\\
&= M\sum_{n_2=0}^{N-1} \delta \left(C_{n_2} - m\right)  \delta \left((q + n_2) \mod N\right)\\
&= M\delta\left(C_{(N - q) \mod N} - m\right).
 \end{split}
\end{equation}
%Since it is bijective between $\{q\}$ and $\{(N - q) \mod N\}$ for $q \in \mathcal{N}$, we can rewrite \eqref{Eq: lambda_l1} as
%\begin{equation}
%\label{Eq:X eigenvalue}
%\lambda_l  = M\delta\left(m - C_q \right),
%\end{equation}
%by reordering $\lambda_l$ in some way.
From \eqref{Eq: lambda_l} we find that $\lambda_l \in \{0,M\}$, which implies
\begin{equation}
    \|\bm{\Psi}\|_s = \max_{l \in \mathbb{L}}{\sigma_l\left(\bm \Psi \right)} = \max_{l \in \mathbb{L}}\sqrt{\lambda_l} = \sqrt{M}.
\end{equation}

%%--------------------------------------------
% App sec: proof of Theorem unique recovery
%%--------------------------------------------

\section{Proof of Theorem \ref{theorem:unique recovery}}
\label{sec:appendices 1}

Given a positive constant $\epsilon$, we set
\begin{equation}
 M e^{\tiny{-\frac{\left(\sqrt{N}c_1 - \sqrt{M-1}\right)^2}{4\left(M-1\right)}}} = \frac{MN}{2} e^{\tiny{-\frac{\left(\sqrt{N}c_2 - \sqrt{M-1}\right)^2}{4\left(M-1\right)}}} = \frac{\epsilon}{2},
\end{equation}
yielding%\Verify{Please verify.} \Answer{Thanks! I have checked but finded the added term $\sqrt{M-1}$ in red color should not appear.}
\begin{equation}
\label{eq:c1}
 c_1 = \sqrt{\frac{M-1}{N}}\left(2\sqrt{\log{2M}-\log\epsilon} + 1\right),
\end{equation}
\begin{equation}
 \label{eq:c2}
 c_2 = \sqrt{\frac{M-1}{N}}\left(2\sqrt{\log{MN}-\log\epsilon} + 1\right).
\end{equation}
Note that for a fixed $M$ and sufficiently large $N$, both $c_1$  and $c_2$ approach 0. From \eqref{Eq:uI bound} and \eqref{Eq:uB bound}, we have that $\mathds{P}\left(\mu_I > c_1\right) < \epsilon/2$ and $\mathds{P}\left(\mu_B > c_2\right) < \epsilon/2$, respectively. Applying the union bound implies
\begin{equation}
    \mathds{P} \left(\mu_I > c_1 \cup \mu_B > c_2 \right)< \frac{\epsilon}{2} + \frac{\epsilon}{2} = \epsilon.
\end{equation}

Substituting \eqref{Eq:spectral norm epxpression} into \eqref{Eq: original sufficient condition} and letting $a: = \frac{1}{8} - 24\mu_B\log{MN} - \frac{3}{2}\mu_I$, $b:=\frac{17}{4}\sqrt{\log MN \left(\mu_I + 1\right)}$, we rewrite \eqref{Eq: original sufficient condition} with some arrangement as
\begin{equation}
\label{eq:bound uI or uB}
\left( \sqrt{\frac{KM}{N}} + b \right)^2 - b^2  \le a.
\end{equation}
For $a\ge 0$, this becomes
\begin{equation}
\label{eq:bound on k to prove}
\sqrt{\frac{KM}{N}} \le \sqrt{a+b^2}-b.
\end{equation}
\iffalse
To find a condition such that \eqref{eq:bound on k to prove} holds, we note that $\frac{a}{2\sqrt{a+b^2}} \le \sqrt{a+b^2}-b$ and that $\frac{a}{2\sqrt{a+b^2}}$ increases monotonically with the increase of $a$ (or the decrease of $b$) for $a>0$. This implies that $\frac{a}{2\sqrt{a+b^2}}$ is a monotonically decreasing function with respect to $\mu_I$ and $\mu_B$. We are now ready to construct a sufficient condition that makes \eqref{eq:bound on k to prove} hold.
\fi
To find a condition such that \eqref{eq:bound on k to prove} holds, we note that $\frac{17a}{36b} \le \sqrt{a+b^2}-b$ (because $a<\frac{1}{8}$ and $\frac{17}{4} < b$) and that $\frac{17a}{36b}$ increases monotonically with the increase of $a$ (or the decrease of $b$) for $a>0$. This implies that $\frac{17a}{36b}$ is a monotonically decreasing function with respect to $\mu_I$ and $\mu_B$. We are now ready to construct a sufficient condition that makes \eqref{eq:bound on k to prove} hold.

By substituting $\mu_I = c_1$ and $\mu_B =c_2$ into $\frac{17a}{36b}$, \eqref{eq:bound on k to prove} and \eqref{Eq: original sufficient condition} hold with a probability at least $1-\epsilon$ as long as $a>0$ and
\begin{equation}
    \sqrt{\frac{KM}{N}}  \le \frac{17a}{36b}
    = \frac{\frac{1}{8}  -  24c_2\log{MN} - \frac{3}{2}c_1}{9\sqrt{\log MN \left(c_1 + 1\right)}}.
\end{equation}
With the substitution of \eqref{eq:c1} and \eqref{eq:c2}, we obtain \eqref{Eq:Block CS sparsity constrain}, completing the proof.

\ifCLASSOPTIONcaptionsoff
  \newpage
\fi

\footnotesize
\balance
\bibliographystyle{IEEEtran}
\bibliography{IEEEabrv,IEEEexample_BCS}

% that's all folks
\end{document}